\documentclass[envcountsame,envcountsect,runningheads]{llncs} 


\newcommand{\ShortVersion}[1]{} 
\newcommand{\LongVersion}[1]{#1} 
\newcommand{\VSpace}[1]{} 




\usepackage{latexsym} 
\usepackage{xspace} 
\usepackage{amssymb} 
\usepackage{amsmath}
\usepackage{stmaryrd} 
\usepackage{url} 
\usepackage{oldgerm} 
\input{xy} 
\xyoption{all} 
\usepackage[ruled,vlined,linesnumbered]{algorithm2e}


\newcommand{\mand}{\sqcap}
\newcommand{\mor}{\sqcup}
\newcommand{\V}{\forall}
\newcommand{\E}{\exists}
\def\tuple#1{\langle#1\rangle} 

\newcommand{\ovl}[1]{\overline{#1}}
\newcommand{\fracc}[2]{\displaystyle{\frac{\;#1\;}{\;#2\;}}}

\newcommand{\EXPTIME}{\textsc{Exp\-Time}\xspace}
\newcommand{\NEXPTIME}{\textsc{NExp\-Time}\xspace}

\newcommand{\mT}{\mathcal{T}} 
\newcommand{\mR}{\mathcal{R}} 
\newcommand{\mA}{\mathcal{A}} 
\newcommand{\mI}{\mathcal{I}} 
\newcommand{\mC}{\mathcal{C}} 
\newcommand{\mE}{\mathcal{E}} 

\def\trojkat{\mbox{{\scriptsize$\!\vartriangleleft$}}}

\def\defeq{=}

\def\eqref#1{(\ref{#1})}

\def\trojkat{\mbox{{\scriptsize$\!\vartriangleleft$}}}
\newcommand{\myEnd}{\mbox{}\hfill\trojkat}

\setlength{\marginparwidth}{4cm}
\let\oldmarginpar\marginpar
\renewcommand\marginpar[1]{\oldmarginpar{\center{\footnotesize{\em #1}}}}

\newcommand{\comment}[1]{}


\newcommand{\CReg}{REG$^c$\xspace}

\newcommand{\SH}{$\mathcal{SH}$\xspace}
\newcommand{\SHI}{$\mathcal{SHI}$\xspace}
\newcommand{\SROIQ}{$\mathcal{SROIQ}$\xspace}

\newcommand{\ALC}{$\mathcal{ALC}$\xspace} 
\newcommand{\SHIQ}{$\mathcal{SHIQ}$\xspace} 
\newcommand{\ALCI}{$\mathcal{ALCI}$\xspace}

\newcommand{\CSHI}{$C_\mathcal{SHI}$\xspace}

\newcommand{\CN}{\mathbf{C}} 
 
\newcommand{\RN}{\mathbf{R}} 
\newcommand{\IN}{\mathbf{I}} 


\newcommand{\rBot}{(\bot)}
\newcommand{\rAnd}{(\mand)}
\newcommand{\rOr}{(\mor)}
\newcommand{\rH}{(H)}

\newcommand{\rTrans}{(\E)}

\newcommand{\rConv}{(conv)}
\newcommand{\rFormingState}{(forming\textrm{-}state)}

\newcommand{\rBotP}{(\bot')}
\newcommand{\rAndP}{(\mand')}
\newcommand{\rOrP}{(\mor')}
\newcommand{\rHP}{(H')}
\newcommand{\rV}{(\V^{'})}

\newcommand{\rTransP}{(\E')}



\newcommand{\closure}{\mathit{closure}}


\newcommand{\Type}{\mathit{Type}}
\newcommand{\SType}{\mathit{SType}}
\newcommand{\Status}{\mathit{Status}}
\newcommand{\Label}{\mathit{Label}}
\newcommand{\CELabel}{\mathit{CELabel}}
\newcommand{\StatePred}{\mathit{StatePred}}
\newcommand{\AfterTransPred}{\mathit{ATPred}}
\newcommand{\RFormulas}{\mathit{RFmls}}
\newcommand{\DFormulas}{\mathit{DFmls}}
\newcommand{\DSTimeStamp}{\mathit{DSTimeStamp}}
\newcommand{\ETimeStamp}{\mathit{ETimeStamp}}

\newcommand{\FmlsRC}{\mathit{FmlsRC}}
\newcommand{\AltFmlSetsSC}{\mathit{AltFmlSetsSC}}

\newcommand{\AltFmlSetsSCP}{\mathit{AltFmlSetsSCP}}
\newcommand{\ConvMethod}{\mathit{ConvMethod}}

\newcommand{\State}{\mathsf{state}}
\newcommand{\NonState}{\mathsf{non\textrm{-}state}}
\newcommand{\Complex}{\mathsf{complex}}
\newcommand{\Simple}{\mathsf{simple}}
\newcommand{\AndNode}{\mathsf{and\textrm{-}node}}
\newcommand{\OrNode}{\mathsf{or\textrm{-}node}}

\newcommand{\Sat}{\mathsf{sat}}
\newcommand{\Unsat}{\mathsf{unsat}}
\newcommand{\Incomplete}{\mathsf{incomplete}}
\newcommand{\Expanded}{\mathsf{expanded}}
\newcommand{\Unexpanded}{\mathsf{unexpanded}}
\newcommand{\Null}{\mathsf{null}}

\SetKwInput{Input}{Input}
\SetKwInput{Output}{Output}
\SetKwInput{GlobalData}{Global data}
\SetKwInput{Purpose}{Purpose}
\SetKw{Call}{call}
\SetKw{Set}{set}
\SetKw{Exit}{exit}
\SetKw{Break}{break}
\SetKw{Continue}{continue}

\SetKwFunction{NewSucc}{NewSucc} 
\SetKwFunction{FindProxy}{FindProxy} 
\SetKwFunction{ConToSucc}{ConToSucc} 
\SetKwFunction{Trans}{Trans}
\SetKwFunction{SRTR}{SRTR}
\SetKwFunction{Apply}{Apply}
\SetKwFunction{ApplyTransRule}{ApplyTransRule}
\SetKwFunction{ApplyConvRule}{ApplyConvRule}
\SetKwFunction{Tableau}{Tableau}
\SetKwFunction{Expand}{Expand}
\SetKwFunction{UpdateStatus}{UpdateStatus}
\SetKwFunction{UpdateUnsatNodes}{UpdateUnsatNodes}
\SetKwFunction{PropagateStatus}{PropagateStatus}
\SetKwFunction{Kind}{Kind}
\SetKwFunction{AfterTrans}{AfterTrans}
\SetKwFunction{FullLabel}{FullLabel}
\SetKwFunction{BeforeFormingState}{BeforeFormingState}
\SetKwFunction{TUnsat}{TUnsat}
\SetKwFunction{TSat}{TSat}
\SetKwFunction{ToExpand}{ToExpand}
\SetKwFunction{AFormulas}{AFmls}

\SetKwFunction{NDFormulas}{NDFmls}
\SetKwFunction{Cnj}{Cnj}
\SetKwFunction{NegCnj}{NegCnj}
\SetKwFunction{NegAll}{NegAll}

\newcommand{\Ext}{\mathit{ext}}


\ShortVersion{\title{A Cut-Free ExpTime Tableau Decision Procedure for the Description Logic SHI}}

\LongVersion{\title{Cut-Free ExpTime Tableaux\\ for Checking Satisfiability of a Knowledge Base\\ in the Description Logic SHI}}

\titlerunning{Cut-Free \EXPTIME\ Tableaux for the Description Logic {\SHI}}

\author{Linh Anh Nguyen}
\institute{
Institute of Informatics, University of Warsaw\\
Banacha 2, 02-097 Warsaw, Poland\\
\email{nguyen@mimuw.edu.pl}
}

\authorrunning{L.A. Nguyen}	

\tocauthor{Linh Anh Nguyen (University of Warsaw)}

\begin{document}
\maketitle \sloppy

\begin{abstract}
We give the first cut-free \EXPTIME (optimal) tableau decision procedure for checking satisfiability of a knowledge base in the description logic \SHI, which extends the description logic \ALC with transitive roles, inverse roles and role hierarchies.
\LongVersion{

\medskip
\noindent\textbf{Keywords}: description logics, automated reasoning, tableaux, global caching.
}
\end{abstract}


\section{Introduction}

Ontologies provide a shared understanding of the domain for different applications that want to communicate to each other. They are useful for several important areas like knowledge representation, software integration and Web applications. Web Ontology Language (OWL) is a layer of the Semantic Web architecture, built on the top of XML and RDF. Together with rule languages it serves as a main knowledge representation formalism for the Semantic Web. The logical foundation of OWL is based on description logics (DLs). Some of the most well-known DLs, in the increasing order of expressiveness, are \ALC, \SH, \SHI, \SHIQ and \SROIQ~\cite{dlbook,HorrocksKS06}.

Description logics represent the domain of interest in terms of concepts, individuals, and roles. A concept is interpreted as a set of individuals, while a role is interpreted as a binary relation among individuals. A knowledge base in a DL consists of axioms about roles (grouped into an RBox), terminology axioms (grouped into a TBox), and assertions about individuals (grouped into an ABox). One of the basic inference problems in DLs, which we denote by $Sat$, is to check satisfiability of a knowledge base. Other inference problems in DLs are usually reducible to this problem. For example, the problem of checking consistency of a concept w.r.t.\ an RBox and a TBox (further denoted by $Cons$) is linearly reducible to $Sat$.

In this paper we study automated reasoning in the DL \SHI, which extends the DL \ALC with transitive roles, inverse roles and role hierarchies. The aim is to develop an efficient tableau decision procedure for the $Sat$ problem in \SHI. It should be complexity-optimal (\EXPTIME), cut-free, and extendable with useful optimizations. Tableau methods have widely been used for automated reasoning in modal and description logics~\cite{DAgostino1999} since they are natural and allow many optimizations. As \SHI is a sublogic of \SROIQ and \CReg (regular grammar logic with converse), one can use the tableau decision procedures of \SROIQ~\cite{HorrocksKS06} and \CReg~\cite{NguyenSzalas-SL} for the $Sat$ problem in \SHI. However, the first procedure has suboptimal complexity (\NEXPTIME when restricted to \SHI), and the second one uses analytic cuts. 

The tableau decision procedure given in~\cite{HorrockSattler99} for the $Cons$ problem in \SHI has \NEXPTIME complexity.  
In~\cite{GoreNguyenTab07} together with Gor{\'e} we gave the first \EXPTIME tableau decision procedure for the $Cons$ problem in \SHI, which uses analytic cuts to deal with inverse roles. In~\cite{NguyenS10TCCI} together with Sza{\l}as we gave the first direct \EXPTIME tableau decision procedure for the $Sat$ problem in the DL \SH. In~\cite{Nguyen-ALCI} we gave the first cut-free \EXPTIME tableau decision procedure for the $Sat$ problem in the DL \ALCI.  

In this paper, by extending the methods of~\cite{GoreNguyenTab07,NguyenS10TCCI,Nguyen-ALCI}, we give the first cut-free \EXPTIME (optimal) tableau decision procedure for the $Sat$ problem in the DL \SHI. 
We use global state caching~\cite{GoreW09,GoreW10,Nguyen-ALCI}, the technique of~\cite{Nguyen-ALCI} for dealing with inverse roles, the technique of~\cite{GoreNguyenTab07,NguyenS10TCCI} for dealing with transitive roles and hierarchies of roles, and the techniques of~\cite{NguyenS10TCCI,NguyenS10FI,NguyenSzalas-SL,Nguyen-ALCI} for dealing with ABoxes. 

The rest of this paper is structured as follows: In Section~\ref{section: prel} we recall the notation and semantics of \SHI. In Section~\ref{section: calculus} we\LongVersion{ present}\ShortVersion{ describe} our tableau decision procedure for the $Sat$ problem in \SHI.\LongVersion{ In Section~\ref{section: proofs} we give proofs for the correctness of our procedure and analyze its complexity.} Section~\ref{section: conc} concludes this work.\ShortVersion{ Due to the lack of space, pseudocode of our decision procedure and proofs of our results are presented only in the long version~\cite{nSHI-long} of the current paper.}


\section{Notation and Semantics of \SHI}
\label{section: prel}

Our language uses a finite set $\CN$ of {\em concept names}, a finite set $\RN$ of role names, and a finite set $\IN$ of individual names.  
We use letters like $A$ and $B$ for {\em concept names}, $r$ and $s$ for {\em role names}, and  $a$ and $b$ for {\em individual names}. We refer to $A$ and $B$ also as {\em atomic concepts}, and to $a$ and $b$ as {\em individuals}. 

For $r \in \RN$, let $r^-$ be a new symbol, called the {\em inverse} of $r$. 
Let $\RN^{-} \defeq \{r^{-} \mid r \in \RN\}$ be the set of {\em inverse roles}. For $r \in \RN$, define $(r^-)^- \defeq r$. A {\em role} is any member of $\RN \cup \RN^{-}$. We use letters like $R$ and $S$ to denote roles.  

An (\SHI) {\em RBox} $\mR$ is a finite set of role axioms of the form $R \sqsubseteq S$ or $R \circ R \sqsubseteq R$. 
By $\Ext(\mR)$ we denote the least extension of $\mR$ such that:
\begin{itemize}
\item $R \sqsubseteq R \in \Ext(\mR)$ for any role $R$
\item if $R \sqsubseteq S \in \Ext(\mR)$ then $R^- \sqsubseteq S^- \in \Ext(\mR)$
\item if $R \circ R \sqsubseteq R \in \Ext(\mR)$ then $R^- \circ R^- \sqsubseteq R^- \in \Ext(\mR)$
\item if $R \sqsubseteq S \in \Ext(\mR)$ and $S \sqsubseteq T \in \Ext(\mR)$ then $R \sqsubseteq T \in \Ext(\mR)$.
\end{itemize}

By $R \sqsubseteq_\mR S$ we mean $R \sqsubseteq S \in \Ext(R)$. If $R \sqsubseteq_\mR S$ then $R$ is a~{\em subrole} of $S$ w.r.t.~$\mR$.
If $R \circ R \sqsubseteq R \in \Ext(\mR)$ then $R$ is a~{\em transitive role} w.r.t.~$\mR$. 

{\em Concepts} in \SHI\ are formed using the following BNF
grammar:
\[
C, D ::=
        \top
        \mid \bot
        \mid A
        \mid \lnot C
        \mid C \mand D
        \mid C \mor D
        \mid \V R.C
        \mid \E R.C
\]

We use letters like $C$ and $D$ to denote arbitrary concepts.

A {\em TBox} is a~finite set of axioms of the form $C
\sqsubseteq D$ or $C \doteq D$.
An {\em ABox} is a~finite set of {\em assertions} of the form
$a\!:\!C$ ({\em concept assertion}) or $R(a, b)$ ({\em role
assertion}).
A {\em knowledge base} in \SHI\ is a tuple $(\mR,\mT,\mA)$, where 
$\mR$ is an RBox, $\mT$ is a~TBox and $\mA$ is an ABox.

A {\em formula} is defined to be either a concept or an ABox assertion. 
We use letters like $\varphi$, $\psi$ and $\xi$ to denote formulas, and letters like $X$, $Y$ and $\Gamma$ to denote sets of formulas.

An {\em interpretation} $\mI = \langle \Delta^\mI, \cdot^\mI
\rangle$ consists of a~non-empty set $\Delta^\mI$, called the {\em
domain} of $\mI$, and a~function $\cdot^\mI$, called the {\em
interpretation function} of $\mI$, that maps every concept name
$A$ to a~subset $A^\mI$ of $\Delta^\mI$, maps every role name
$r$ to a~binary relation $r^\mI$ on $\Delta^\mI$, and maps
every individual name $a$ to an element $a^\mI \in \Delta^\mI$.
The interpretation function is extended to inverse roles and complex concepts as
follows:
\[
\begin{array}{c}
(r^-)^\mI = \{ \tuple{x,y} \mid \tuple{y,x} \in r^\mI\} 
\quad\quad\quad\quad
	\top^\mI = \Delta^\mI
\quad\quad\quad\quad
	\bot^\mI = \emptyset
\\[1.0ex]
(\lnot C)^\mI = \Delta^\mI \setminus C^\mI
\quad\quad 
	(C \mand D)^\mI = C^\mI \cap D^\mI
\quad\quad 
	(C \mor D)^\mI = C^\mI \cup D^\mI
\\[1.0ex]
(\V R.C)^\mI =
        \big\{ x \in \Delta^\mI \mid \V y\big[ (x,y) \in R^\mI
                       \textrm{ implies } y \in C^\mI\big]\big\}
\\[1.0ex]
(\E R.C)^\mI =
        \big\{ x \in \Delta^\mI \mid \E y\big[ (x,y) \in R^\mI
                 \textrm{ and }\ y \in C^\mI\big]\big\}
\end{array}
\]

Note that $(r^-)^\mI = (r^\mI)^{-1}$ and this is compatible with $(r^-)^- = r$.

For a set $\Gamma$ of concepts, define $\Gamma^\mI = \{x \in \Delta^\mI \mid x \in C^\mI \textrm{ for all } C \in \Gamma\}$.

The relational composition of binary relations $\mathrm{R}_1$, $\mathrm{R}_2$ is denoted by $\mathrm{R}_1 \circ \mathrm{R}_2$.

An interpretation $\mI$ is a~{\em model of an RBox} $\mR$ if for every axiom $R \sqsubseteq S$ (resp.\ $R \circ R \sqsubseteq R$) of $\mR$, we have that $R^\mI \subseteq S^\mI$ (resp.\ $R^\mI \circ R^\mI \subseteq R^\mI$). 
Note that if $\mI$ is a model of $\mR$ then it is also a model of $\Ext(\mR)$.

An interpretation $\mI$ is a~{\em model of a~TBox} $\mT$ if for
every axiom $C \sqsubseteq D$ (resp.\ $C \doteq D$) of $\mT$,
we have that $C^\mI \subseteq D^\mI$ (resp.\ $C^\mI = D^\mI$).

An interpretation $\mI$ is a~{\em model of an ABox} $\mA$ if
for every assertion $a\!:\!C$ (resp.\ $R(a,b)$) of $\mA$, we
have that $a^\mI \in C^\mI$ (resp.\ $(a^\mI,b^\mI) \in R^\mI$).

An interpretation $\mI$ is a~{\em model of a~knowledge base}
$(\mR,\mT,\mA)$ if $\mI$ is a~model of all $\mR$, $\mT$ and $\mA$.
A knowledge base $(\mR,\mT,\mA)$ is {\em satisfiable} if it has
a~model.

An interpretation $\mI$ {\em satisfies} a concept $C$ (resp.\ a set $X$ of concepts) if $C^\mI \neq \emptyset$ (resp.\ $X^\mI \neq \emptyset$).
A set $X$ of concepts is {\em satisfiable w.r.t.\ an RBox $\mR$ and a TBox $\mT$} if there exists a model of $\mR$ and $\mT$ that satisfies $X$. For $X = Y \cup \mA$, where $Y$ is a set of concepts and $\mA$ is an ABox, we say that $X$ is {\em satisfiable w.r.t.\ an RBox $\mR$ and a TBox $\mT$} if there exists a~model of $(\mR,\mT,\mA)$ that satisfies~$X$.


\section{A Tableau Decision Procedure for \SHI}
\label{section: calculus}

We assume that concepts and ABox assertions are represented in
negation normal form (NNF), where $\lnot$ occurs only directly
before atomic concepts.\footnote{Every formula can be
transformed to an equivalent formula in NNF.} 
We use $\overline{C}$ to denote the NNF of $\lnot C$, and for $\varphi = a\!:\!C$, we use $\ovl{\varphi}$ to denote $a\!:\!\ovl{C}$.
For simplicity, we treat axioms
of $\mT$ as concepts representing global assumptions: an axiom
$C \sqsubseteq D$ is treated as $\overline{C} \mor D$, while an
axiom $C \doteq D$ is treated as $(\overline{C} \mor D) \mand
(\overline{D} \mor C)$. That is, we assume that $\mT$ consists
of concepts in NNF. Thus, an interpretation $\mI$ is a~model of
$\mT$ iff $\mI$ validates every concept $C \in \mT$. 
As this way of handling the TBox is not efficient in practice, the absorption technique like the one discussed in~\cite{NguyenSzalas09ICCCI,NguyenS10TCCI} can be used to improve the performance of our algorithm.

From now on, let $(\mR,\mT,\mA)$ be a knowledge base in NNF of the logic \SHI, with $\mA \neq \emptyset$.$\,$\footnote{If $\mA$ is empty, we can add $a\!:\!\top$ to it, where $a$ is a special individual.} 
In this section we present a tableau calculus for checking satisfiability of $(\mR,\mT,\mA)$.

For a set $X$ of concepts and a set $Y$ of ABox assertions, we define:
\begin{eqnarray*}
\SRTR_\mR(R,S) & = & (R \sqsubseteq_\mR S \land S \circ S \sqsubseteq S \in \Ext(\mR)) \\
\Trans_\mR(X,R) & = & \{D \mid \V R.D \in X\} \cup \{\V S.D \in X \mid \SRTR_\mR(R,S) \}\\
\Trans_\mR(X,R,a) & = & \{a\!:\!D \mid \V R.D \in X\} \cup \{a\!:\!\V S.D \mid \V S.D \in X \land \SRTR_\mR(R,S)\}\\
\Trans_\mR(Y,a,R) & = & \{D \mid a\!:\!\V R.D \in Y\} \cup \{\V S.D \mid a\!:\!\V S.D \in Y \land \SRTR_\mR(R,S)\}\\
\Trans_\mR(Y,a,R,b) & = & \{b\!:\!D \mid a\!:\!\V R.D \in Y\}\ \cup\\
	& & \{b\!:\!\V S.D \mid a\!:\!\V S.D \in Y \land \SRTR_\mR(R,S)\}
\end{eqnarray*}
We call $\Trans_\mR(X,R)$ the {\em transfer of $X$ through $R$ w.r.t.\ $\mR$}, call $\Trans_\mR(X,R,a)$ the {\em transfer of $X$ through $R$ to $a$ w.r.t.\ $\mR$}, call $\Trans_\mR(Y,a,R)$ the {\em transfer of $Y$ starting from $a$ through $R$ w.r.t.\ $\mR$}, and call $\Trans_\mR(Y,a,R,b)$ the {\em transfer of $Y$ from $a$ to $b$ through $R$ w.r.t.\ $\mR$}.


In what follows we define tableaux as rooted ``and-or'' graphs. Such a graph is a tuple $G = (V,E,\nu)$, where $V$ is a set of nodes, $E \subseteq V \times V$ is a set of edges, $\nu \in V$ is the root, and each node $v \in V$ has a number of attributes. If there is an edge $(v,w) \in E$ then we call $v$ a {\em predecessor} of $w$, and call $w$ a {\em successor} of~$v$. The set of all attributes of $v$ is called the {\em contents of $v$}. Attributes of tableau nodes are:
\begin{itemize}
\item $\Type(v) \in \{\State, \NonState\}$. If $\Type(v) = \State$ then we call $v$ a {\em state}, else we call $v$ a {\em non-state} (or an {\em internal} node). If $\Type(v) = \State$ and $(v,w) \in E$ then $\Type(w) = \NonState$.

\item $\SType(v) \in \{\Complex, \Simple\}$ is called the subtype of $v$. If $\SType(v) = \Complex$ then we call $v$ a {\em complex node}, else we call $v$ a {\em simple node}. The graph never contains edges from a simple node to a complex node. If $(v,w)$ is an edge from a complex node $v$ to a simple node $w$ then $\Type(v) = \State$ and $\Type(w) = \NonState$. The root of the graph is a complex node. 

\item $\Status(v) \in \{\Unexpanded$, $\Expanded$, $\Incomplete$, $\Unsat$, $\Sat\}$.

\item $\Label(v)$ is a finite set of formulas, called the label of $v$. The label of a complex node consists of ABox assertions, while the label of a simple node consists of concepts.

\item $\RFormulas(v)$ is a finite set of formulas, called the set of reduced formulas of~$v$.

\item $\DFormulas(v)$ is a finite set of formulas, called the set of disallowed formulas of~$v$.

\item $\StatePred(v) \in V \cup \{\Null\}$ is called the state-predecessor of $v$. It is available only when $\Type(v) = \NonState$. 
If $v$ is a non-state and $G$ has no paths connecting a state to $v$ then $\StatePred(v) = \Null$. Otherwise, $G$ has exactly one state $u$ that is connected to $v$ via a path not containing any other states. In that case, $\StatePred(v) = u$.

\item $\AfterTransPred(v) \in V$ is called the after-transition-predecessor of~$v$. It is available only when $\Type(v) = \NonState$. If $v$ is a non-state and $v_0 = \StatePred(v)$ ($\neq \Null$) then there is exactly one successor $v_1$ of $v_0$ such that every path connecting $v_0$ to $v$ must go through $v_1$, and we have that $\AfterTransPred(v) = v_1$. 
We define $\AfterTrans(v) = (\AfterTransPred(v) = v)$. 
If $\AfterTrans(v)$ holds then either $v$ has no predecessors (i.e.\ it is the root of the graph) or it has exactly one predecessor $u$ and $u$ is a state. 

\item $\CELabel(v)$ is a formula called the coming edge label of $v$. It is available only when $v$ is a successor of a state $u$ (and $\Type(v) = \NonState$). In that case, we have $u = \StatePred(v)$, $\AfterTrans(v)$ holds, $\CELabel(v) \in \Label(u)$, and
  \begin{itemize}
  \item if $\SType(u) = \Simple$ then\\ \mbox{\hspace{1.9em}}$\CELabel(v)$ is of the form $\E R.C$ and $C \in \Label(v)$
  \item else $\CELabel(v)$ is of the form $a\!:\!\E R.C$ and $C \in \Label(v)$.
  \end{itemize}
Informally, $v$ was created from $u$ to realize the formula $\CELabel(v)$ at~$u$.

\item $\ConvMethod(v) \in \{0,1\}$ is called the converse method of $v$. It is available only when $\Type(v) = \State$.

\item $\FmlsRC(v)$ is a set of formulas, called the set of formulas required by converse for $v$. It is available only when $\Type(v) = \State$ and will be used only when $\ConvMethod(v) = 0$. 

\item $\AltFmlSetsSC(v)$ is a set of sets of formulas, called the set of alternative sets of formulas suggested by converse for $v$. It is available only when $\Type(v) = \State$ and will be used only when $\ConvMethod(v) = 1$. 

\item $\AltFmlSetsSCP(v)$ is a set of sets of formulas, called the set of alternative sets of formulas suggested by converse for the predecessor of $v$. It is available only when $v$ has a predecessor being a state and will be used only when $\ConvMethod(v) = 1$. 
\end{itemize}

We define 
\VSpace{-1.5ex}
\begin{eqnarray*}
\AFormulas(v) & = & \Label(v) \cup \RFormulas(v) \\
\NDFormulas(v) & = & \{\ovl{\varphi} \mid \varphi \in \DFormulas(v)\} \\
\FullLabel(v) & = & \AFormulas(v) \cup \NDFormulas(v) \\ 
\Kind(v) & = & 
  \left\{
  \begin{array}{l}
  \AndNode \textrm{ if } \Type(v) = \State\\
  \OrNode  \textrm{ if } \Type(v) = \NonState
  \end{array}
  \right.\\
\BeforeFormingState(v) & = & \textrm{$v$ has a successor which is a state}
\end{eqnarray*}

The sets $\AFormulas(v)$, $\NDFormulas(v)$, and $\FullLabel(v)$ are respectively called the available formulas of $v$, the negations of the formulas disallowed at $v$, and the full label of~$v$. 
In an ``and-or'' graph, states play the role of {\em ``and''-nodes}, while non-states play the role of {\em ``or''-nodes}. 

By the {\em local graph} of a state $v$ we mean the subgraph of $G$ consisting of all the path starting from $v$ and not containing any other states. 
Similarly, by the local graph of a non-state $v$ we mean the subgraph of $G$ consisting of all the path starting from $v$ and not containing any states.

We apply global state caching: if $v_1$ and $v_2$ are different states then $\Label(v_1) \neq \Label(v_2)$ or $\RFormulas(v_1) \neq \RFormulas(v_2)$ or $\DFormulas(v_1) \neq \DFormulas(v_2)$. 
If $v$ is a non-state such that $\AfterTrans(v)$ holds then we also apply global caching for the local graph of $v$: if $w_1$ and $w_2$ are different nodes of the local graph of $v$ then $\Label(w_1) \neq \Label(w_2)$ or $\RFormulas(w_1) \neq \RFormulas(w_2)$ or $\DFormulas(w_1) \neq \DFormulas(w_2)$. 


\begin{table}[t!]
$$
\begin{array}{|c|}
\hline
\ \\
\rAnd\, \fracc{X, C \mand D}{X, C, D} \quad\quad\quad
\rOr\, \fracc{X, C \mor D}{X, C \mid X, D} \quad\quad\quad
\rH\, \fracc{X, \V S.C}{X, \V S.C, \V R.C}\;\textrm{if $R \sqsubseteq_\mR S$}\\
\ \\
\hline
\ \\
\rTrans\; \fracc{X, \E R_1.C_1, \ldots, \E R_k.C_k}{C_1, X_1, \mT \;\&\; \ldots \;\&\; C_k, X_k, \mT}\;\; \textrm{if} 
  \left\{
  \begin{array}{l}
  \textrm{$X$ contains no concepts of the}\\
  \textrm{form $\E R.D$ and, for $1 \leq i \leq k$,}\\
  \textrm{$X_i = \Trans_\mR(X,R_i)$}
  \end{array}
  \right.\;\\
\ \\
\hline\hline
\ \\
\, \rAndP\, \fracc{X,\, a\!:\!(C \mand D)}{X,\, a\!:\!C,\, a\!:\!D} \;\;\ 
\rOrP\, \fracc{X,\, a\!:\!(C \mor D)}{X,\, a\!:\!C \mid X,\, a\!:\!D} \;\;\ 
\rHP\, \fracc{X, a\!:\!\V S.C}{X, a\!:\!\V S.C, a\!:\!\V R.C}\;\textrm{if $R \sqsubseteq_\mR S$}\,\\
\ \\
\hline
\ \\
\rV\; \fracc{X,\, R(a,b)}{X,\, R(a,b),\, \Trans(X,a,R,b),\, \Trans(X,b,R^-,a)} \\
\ \\
\hline
\ \\
\rTransP\; \fracc{X,\, a_1\!:\!\E R_1.C_1, \ldots,\, a_k\!:\!\E R_k.C_k}{C_1, X_1, \mT \;\&\; \ldots \;\&\; C_k, X_k, \mT}\;\; \textrm{if} 
  \left\{
  \begin{array}{l}
  \textrm{$X$ contains no assertions of the}\\
  \textrm{form $a\!:\!\E R.D$ and, for $1 \leq i \leq k$,}\\
  \textrm{$X_i = \Trans_\mR(X,a_i,R_i)$}
  \end{array}
  \right.\;\\
\ \\
\hline
\end{array}
$$
\caption{Some rules of the tableau calculus \CSHI \label{table: CSHI}}
\VSpace{-3.0ex}
\end{table}


\LongVersion{
\begin{figure*}[ht!]
\begin{function}[H]
\caption{NewSucc($v, type, sType, ceLabel, label, rFmls, dFmls$)\label{proc: NewSucc}}
\GlobalData{a rooted graph $(V,E,\nu)$.}
\Purpose{create a new successor for $v$.}
\LinesNumberedHidden
create a new node $w$,\ \ 
$V := V \cup \{w\}$,\ \ 
\lIf{$v \neq \Null$}{$E := E \cup \{(v,w)\}$}\;

$\Type(w) := type$,\ 
$\SType(w) := sType$,\ 
$\Status(w) := \Unexpanded$\; 
$\Label(w) := label$,\ 
$\RFormulas(w) := rFmls$,\ 
$\DFormulas(w) := dFmls$\; 

\uIf{$type = \NonState$}{
   \lIf{$v = \Null$ or $\Type(v) = \State$}{$\StatePred(w) := v$,\ \ 
	$\AfterTransPred(w) := w$\\
   }
   \lElse{$\StatePred(w) := \StatePred(v)$,\ \ 
	$\AfterTransPred(w) := \AfterTransPred(v)$}\;
   \lIf{$\Type(v) = \State$}{$\CELabel(w) := ceLabel$,\ \ 
	$\AltFmlSetsSCP(w) := \emptyset$}
}
\lElse{$\ConvMethod(w) := 0$,\ \ 
   $\FmlsRC(w) := \emptyset$,\ \ 
   $\AltFmlSetsSC(w) := \emptyset$
}\\

\Return{w}
\end{function}

\begin{function}[H]
\caption{FindProxy($type, sType, v_1, label, rFmls, dFmls$)\label{proc: FindProxy}}
\GlobalData{a rooted graph $(V,E,\nu)$.}
\LinesNumberedHidden

\lIf{$type = \State$}{$W := V$}
\lElse{$W := $ the nodes of the local graph of $v_1$}\;

\lIf{there exists $w \in W$ such that $\Type(w) = type$ and $\SType(w) = sType$ and $\Label(w) = label$ and $\RFormulas(w) = rFmls$ and $\DFormulas(w) = dFmls$}{\Return $w$\\}
\lElse{\Return $\Null$}
\end{function}

\begin{function}[H]
\caption{ConToSucc($v, type, sType, ceLabel, label, rFmls, dFmls$)\label{proc: ConToSucc}}
\GlobalData{a rooted graph $(V,E,\nu)$.}
\Purpose{connect $v$ to a successor, which is created if necessary.}

\lIf{$type = \State$}{$v_1 := \Null$}
\lElse{$v_1 := \AfterTransPred(v)$}

$w := \FindProxy(type, sType, v_1, label, rFmls, dFmls)$\;
\lIf{$w \neq \Null$}{$E := E \cup \{(v,w)\}$\\}
\lElse{$w := \NewSucc(v, type, sType, ceLabel, label, rFmls, dFmls)$}\;

\Return{w}
\end{function}

\begin{function}[H]
\caption{TUnsat($v$)\label{proc: TUnsat}}
\LinesNumberedHidden

\Return{($\bot \in \Label(v)$ or there exists $\{\varphi,\ovl{\varphi}\} \subseteq \Label(v)$)}
\end{function}

\begin{function}[H]
\caption{TSat($v$)\label{proc: TSat}}
\LinesNumberedHidden

\Return{($\Status(v) = \Unexpanded$ and no rule except $\rConv$ is applicable to $v$)}
\end{function}

\begin{function}[H]
\caption{ToExpand()\label{proc: ToExpand}}
\GlobalData{a rooted graph $(V,E,\nu)$.}
\lIf{there exists a node $v \in V$ with $\Status(v) = \Unexpanded$}{\Return $v$\\}
\lElse{\Return $\Null$}
\end{function}
\VSpace{-3ex}
\end{figure*}


\begin{figure*}[ht!]
\begin{procedure}[H]
\caption{Apply($\rho, v$)\label{proc: Apply}}
\GlobalData{a rooted graph $(V,E,\nu)$.}
\Input{a rule $\rho$ and a node $v \in V$ s.t. if $\rho \neq \rConv$ then $\Status(v) = \Unexpanded$ else $\Status(v) = \Expanded$ and $\BeforeFormingState(v)$ holds.}
\Purpose{applying the tableau rule $\rho$ to the node $v$.}

\BlankLine

  \uIf{$\rho = \rFormingState$}{
     $\ConToSucc(v,\State, \SType(v),\Null,\Label(v),\RFormulas(v), \DFormulas(v))$
  }
  \lElseIf{$\rho = \rConv$}{$\ApplyConvRule(v)$ \tcp{defined on page~\pageref{proc: ApplyConvRule}}}
  \uElseIf{$\rho \in \{\rTrans,\rTransP\}$}{
     $\ApplyTransRule(\rho,v)$\tcp*[l]{defined on page~\pageref{proc: ApplyTransRule}}
     \If{$\Status(v) = \{\Incomplete,\Unsat,\Sat\}$}{$\PropagateStatus(v)$, \Return}
  }
  \Else{
     let $X_1$, \ldots, $X_k$ be the possible conclusions of the rule\;
     \lIf{$\rho \in \{\rH,\rHP,\rV\}$}{$Y := \RFormulas(v)$\\}
     \lElse{$Y := \RFormulas(v) \cup \{\textrm{the principal formula of $\rho$}\}$}\;
     \lForEach{$1 \leq i \leq k$}{$\ConToSucc(v, \NonState, \SType(v), \Null, X_i, Y, \DFormulas(v))$}
  }

$\Status(v) := \Expanded$\;

\BlankLine
\ForEach{successor $w$ of $v$ with $\Status(w) \notin \{\Incomplete,\Unsat,\Sat\}$}{
   \lIf{$\TUnsat(w)$}{$\Status(w) := \Unsat$\\}
   \ElseIf{$\Type(w) = \NonState$}{
	$v_0 := \StatePred(w)$,\ \ 
	$v_1 := \AfterTransPred(w)$\;

	\uIf{$\SType(v_0) = \Simple$}{
	   let $\E R.C$ be the form of $\CELabel(v_1)$\;
	   $X := \Trans_\mR(\Label(w),R^-) \setminus \AFormulas(v_0)$
	}
	\Else{
	   let $a\!:\!\E R.C$ be the form of $\CELabel(v_1)$\;
	   $X := \Trans_\mR(\Label(w),R^-,a) \setminus \AFormulas(v_0)$
	}

	\If{$X \neq \emptyset$}{
	   \uIf{$\ConvMethod(v_0) = 0$}{
		$\FmlsRC(v_0) := \FmlsRC(v_0) \cup X$\;
		\lIf{$X \cap \DFormulas(v_0) \neq \emptyset$}
		   {$\Status(v_0) := \Unsat$, \Return}
	   }
	   \lElseIf{$X \cap \DFormulas(v_0) \neq \emptyset$}{$\Status(w) := \Unsat$\\}
	   \Else{
		$\AltFmlSetsSCP(v_1) := \AltFmlSetsSCP(v_1) \cup \{X\}$\;
		$\Status(w) := \Incomplete$
	   }
	}
   }
   \lElseIf{$\TSat(w)$}{$\Status(w) := \Sat$}
}

$\UpdateStatus(v)$\;
\lIf{$\Status(v) \in \{\Incomplete,\Unsat,\Sat\}$}{$\PropagateStatus(v)$}
\end{procedure}
\VSpace{-3ex}
\end{figure*}


\begin{figure*}[h!]
\begin{procedure}[H]
\caption{ApplyConvRule($v$)\label{proc: ApplyConvRule}}
\GlobalData{a rooted graph $(V,E,\nu)$.}
\Purpose{applying the rule $\rConv$ to the node $v$.}

\BlankLine

     let $w$ be the only successor of $v$,\ \ 
     $E := E \setminus \{(v,w)\}$\;

     \uIf{$\ConvMethod(w) = 0$}{
	   $newLabel := \Label(v) \cup \FmlsRC(w)$\;
	   $\ConToSucc(v,\NonState,\SType(v),\Null, newLabel, \RFormulas(v), \DFormulas(v))$
     }
     \Else{
	   let $\{\varphi_1\}$, \ldots, $\{\varphi_n\}$ be all the singleton sets belonging to $\AltFmlSetsSC(w)$, and let $remainingSetsSC$ be the set of all the remaining sets\;
	   \ForEach{$1 \leq i \leq n$}{
		$newLabel := \Label(v) \cup \{\varphi_i\}$,\ \ 
		$newDFmls := \DFormulas(v) \cup \{\varphi_j \mid 1 \leq j < i\}$\;
		$\ConToSucc(v,\NonState,\SType(v),\Null, newLabel, \RFormulas(v), newDFmls)$
	   }
	   $Y := \{\varphi_i \mid 1 \leq i \leq n\}$\;
	   \ForEach{$X \in remainingSetsSC$}{
		$\ConToSucc(v,\NonState,\SType(v),\Null, \Label(v) \cup X, \RFormulas(v), \DFormulas(v) \cup Y)$
	   }
     }

\end{procedure}


\bigskip

\begin{procedure}[H]
\caption{ApplyTransRule($\rho, u$)\label{proc: ApplyTransRule}}
\GlobalData{a rooted graph $(V,E,\nu)$.}
\Purpose{applying the transitional rule $\rho$, which is $\rTrans$ or $\rTransP$, to the state $u$.}

\BlankLine

let $X_1$, \ldots, $X_k$ be all the conclusions of the rule $\rho$ with $\Label(u)$ as the premise\;
\uIf{$\rho = \rTrans$}{
  let $\E R_1.C_1$, \ldots, $\E R_k.C_k$ be the corresponding principal formulas\;
  \ForEach{$1 \leq i \leq k$}{
    $v := \NewSucc(u,\NonState,\Simple,\E R_i.C_i,X_i,\emptyset, \emptyset)$\;
    $\FmlsRC(u) := \FmlsRC(u) \cup (\Trans_\mR(\Label(v),R_i^-) \setminus \AFormulas(u))\}$
  }
}
\Else{
  let $a_1\!:\!\E R_1.C_1$, \ldots, $a_k\!:\!\E R_k.C_k$ be the corresponding principal formulas\;
  \ForEach{$1 \leq i \leq k$}{
    $v := \NewSucc(u,\NonState,\Simple,a_i\!:\!\E R_i.C_i,X_i,\emptyset, \emptyset)$\;
    $\FmlsRC(u) := \FmlsRC(u) \cup (\Trans_\mR(\Label(v),R_i^-,a_i) \setminus \AFormulas(u)\})$
  }
}

\lIf{$\FmlsRC(u) \cap \DFormulas(u) \neq \emptyset$}{$\Status(u) := \Unsat$}\;

\BlankLine
\lWhile{$\Status(u) \neq \Unsat$ and there exists a node $w$ in the local graph of $u$ such that $\Status(w) = \Unexpanded$ and a unary rule $\rho \neq \rFormingState$ is applicable to~$w$}{$\Apply(\rho,w)$}\;

\BlankLine
\If{$\Status(u) \neq \Unsat$}{
  \lIf{$\FmlsRC(u) \neq \emptyset$}{$\Status(u) := \Incomplete$\\}
  \lElse{$\ConvMethod(u) := 1$}
}
\end{procedure}
\VSpace{-3ex}
\end{figure*}


\begin{figure*}[ht!]
\begin{function}[H]
\caption{Tableau($\mR, \mT, \mA$)\label{proc: Tableau}}
\Input{a knowledge base $(\mR,\mT,\mA)$ in NNF in the logic \SHI.}
\GlobalData{a rooted graph $(V,E,\nu)$.}
\BlankLine

$X := \mA \cup \{(a\!:\!C) \mid C \in \mT$ and $a$ is an individual occurring in $\mA\}$\;
$\nu := \NewSucc(\Null,\NonState,\Complex,\Null,X, \emptyset, \emptyset)$\;
\lIf{$\TUnsat(\nu)$}{$\Status(\nu) := \Unsat$\\}
\lElseIf{$\TSat(\nu)$}{$\Status(\nu) := \Sat$}\;

\BlankLine
\While{$(v := \ToExpand()) \neq \Null$}{
choose a tableau rule $\rho$ different from $\rConv$ and applicable to $v$\;
$\Apply(\rho, v)$\tcp*[l]{defined on page \pageref{proc: Apply}}
} 

\Return $(V,E,\nu)$
\end{function}


\begin{procedure}[H]
\caption{UpdateStatus($v$)\label{proc: UpdateStatus}}
\GlobalData{a rooted graph $(V,E,\nu)$.}
\Input{a node $v \in V$ with $\Status(v) = \Expanded$.}
\BlankLine

\uIf{$\Kind(v) = \OrNode$}{
   \lIf{some successors of $v$ have status $\Sat$}{$\Status(v) := \Sat$\\}
   \lElseIf{all successors of $v$ have status $\Unsat$}{$\Status(v) := \Unsat$\\}
   \ElseIf{every successor of $v$ has status $\Incomplete$ or $\Unsat$}{
	\uIf{$v$ has a successor $w$ such that $\Type(w) = \State$}{
	   \tcp{$w$ is the only successor of~$v$}
	   $\Apply(\rConv,v)$
	}
	\lElse{$\Status(v) := \Incomplete$}
   }
}
\Else(\tcp*[h]{$\Kind(v) = \AndNode$}){
   \lIf{all successors of $v$ have status $\Sat$}{$\Status(v) := \Sat$\\}
   \lElseIf{some successors of $v$ have status $\Unsat$}{$\Status(v) := \Unsat$\\}
   \ElseIf{$v$ has a successor $w$ with $\Status(w) = \Incomplete$}{
	$\AltFmlSetsSC(v) := \AltFmlSetsSCP(w)$,\ \ 
	$\Status(v) := \Incomplete$ 
   }
}
\end{procedure}



\begin{procedure}[H]
\caption{PropagateStatus($v$)\label{proc: PropagateStatus}}
\GlobalData{a rooted graph $(V,E,\nu)$.}
\Input{a node $v \in V$ with $\Status(v) \in \{\Incomplete,\Unsat,\Sat\}$.}
\BlankLine

\ForEach{predecessor $u$ of $v$ with $\Status(u) = \Expanded$}{
   $\UpdateStatus(u)$\;
   \lIf{$\Status(u) \in \{\Incomplete,\Unsat,\Sat\}$}{$\PropagateStatus(u)$}
}
\end{procedure}
\VSpace{-3ex}
\end{figure*}
} 


Our calculus \CSHI\ for the description logic \SHI will be specified, amongst others, by a finite set of tableau rules, which are used to expand nodes of tableaux. A~{\em tableau rule} is specified with the following information:
\begin{itemize}
\item the kind of the rule: an ``and''-rule or an ``or''-rule
\item the conditions for applicability of the rule (if any)
\item the priority of the rule 
\item the number of successors of a node resulting from applying the rule to it, and the way to compute their contents.
\end{itemize}

Tableau rules are usually written downwards, with a set of formulas above the line as the {\em premise}, which represents the label of the node to which the rule is applied, and a number of sets of formulas below the line as the {\em (possible) conclusions}, which represent the labels of the successor nodes resulting from the application of the rule.
Possible conclusions of an ``or''-rule are separated by $\mid$, while conclusions of an ``and''-rule are separated by~$\&$. If a rule is a unary rule (i.e.\ a rule with only one possible conclusion) or an ``and''-rule then its conclusions are ``firm'' and we ignore the word ``possible''.
The meaning of an ``or''-rule is that if the premise is satisfiable w.r.t.\ $\mR$ and $\mT$ then some of the possible conclusions are also satisfiable w.r.t.\ $\mR$ and $\mT$, while the meaning of an ``and''-rule is that if the premise is satisfiable w.r.t.\ $\mR$ and $\mT$ then all of the conclusions are also satisfiable w.r.t.\ $\mR$ and~$\mT$.

Such a representation gives only a part of the specification of the rules.

We write $X,\varphi$ or $\varphi,X$ to denote $X \cup \{\varphi\}$, and write $X, Y$ to denote $X \cup Y$. 
Our {\em tableau calculus \CSHI} for \SHI w.r.t.\ the RBox $\mR$ and the TBox $\mT$ consists of rules which are partially specified in Table~\ref{table: CSHI} together with two special rules $\rFormingState$ and $\rConv$. 

The rules $\rTrans$ and $\rTransP$ are the only ``and''-rules and the only {\em transitional rules}. 
The other rules of \CSHI are ``or''-rules, which are also called {\em static rules}. 
The transitional rules are used to expand states of tableaux, while the static rules are used to expand non-states of tableaux. 

For any rule of \CSHI except $\rFormingState$ and $\rConv$, the distinguished formulas of the premise are called the {\em principal formulas} of the rule. The rules $\rFormingState$ and $\rConv$ have no principal formulas.
As usually, we assume that, for each rule of \CSHI described in Table~\ref{table: CSHI}, the principal formulas are not members of the set $X$ which appears in the premise of the rule.

Expanding a non-state $v$ of a tableau by a static rule $\rho \in \{\rAnd,\rOr,\rAndP,\rOrP\}$ which uses $\varphi$ as the principal formula, we put $\varphi$ into the set $\RFormulas(w)$ of each successor $w$ of $v$. 
Recall that $\RFormulas(w)$ is called the set of the reduced formulas of~$w$. If $w$ is a non-state, $v_1 = \AfterTransPred(w)$ and $v_1, v_2, \ldots, v_k = w$ is the path (of non-states) from $v_1$ to $w$, then an occurrence $\psi \in \RFormulas(w)$ means there exists $1 \leq i < k$ such that $\psi \in \Label(v_i)$ and $\psi$ has been reduced at $v_i$. After that reduction, $\psi$ was put into $\RFormulas(v_{i+1})$ and propagated to $\RFormulas(v_k)$. 

Expanding a simple (resp.\ complex) state $v$ of a tableau by the transitional rule $\rTrans$ (resp.\ $\rTransP$), each successor $w_i$ of $v$ is created due to a corresponding principal formula $\E R_i.C_i$ (resp.\ $a_i\!:\!\E R_i.C_i$) of the rule, and $\RFormulas(w)$ is set to the empty set.

For any state $w$, every predecessor $v$ of $w$ is always a non-state. Such a node $v$ was expanded and connected to $w$ by the static rule $\rFormingState$. The nodes $v$ and $w$ correspond to the same element of the domain of the interpretation under construction. In other words, the rule $\rFormingState$ ``transforms'' a non-state to a state. It guarantees that, if $\BeforeFormingState(v)$ holds then $v$ has exactly one successor, which is a state.

The rule $\rConv$ used for dealing with converses will be discussed shortly.

The priorities of the rules of \CSHI are as follows (the bigger, the stronger): 
$\rAnd$, $\rAndP$, $\rH$, $\rHP$, $\rV$:~5; 
$\rOr$, $\rOrP$:~4;
$\rFormingState$:~3;
$\rTrans$, $\rTransP$:~2; 
$\rConv$:~1.

The conditions for applying a rule $\rho \neq \rConv$ to a node $v$ are as follows: 
\begin{itemize}
\item the rule has $\Label(v)$ as the premise (thus, the rules $\rAnd$, $\rOr$, $\rH$, $\rTrans$ are applicable only to simple nodes, and the rules $\rAndP$, $\rOrP$, $\rHP$, $\rV$, $\rTransP$ are applicable only to complex nodes)
\item all the conditions accompanying with $\rho$ in Table~\ref{table: CSHI} are satisfied
\item if $\rho$ is a transitional rule then $\Type(v) = \State$
\item if $\rho$ is a static rule then $\Type(v) = \NonState$ and 
  \begin{itemize}
  \item if $\rho \in \{\rAnd, \rOr, \rAndP, \rOrP\}$ then the principal formula of $\rho$ does not belong to $\RFormulas(v)$, else if $\rho \in \{\rH,\rHP,\rV\}$ then the formula occurring in the conlusion but not in the premise of $\rho$ does not belong to $\AFormulas(v)$
  \item no static rule with a higher priority is applicable to~$v$.
  \end{itemize}
\end{itemize}


We now explain the ways of dealing with converses, i.e., with inverse roles. 

Consider the case when $\Type(v) = \State$, $\SType(v) = \Simple$, $\E R.C \in \Label(v)$ and $v$ corresponds to an element $x_v \in \Delta^\mI$ of the interpretation $\mI$ under construction. We need to realize the formulas of $\Label(v)$ at $v$ so that $x_v \in (\Label(v))^\mI$. The formula $\E R.C$ is realized at $v$ by making a transition from $v$ to $w$ with $\Label(w) = \{C\} \cup \Trans_\mR(\Label(v),R) \cup \mT$. The node $w$ corresponds to an element $x_w \in \Delta^\mI$ such that $(x_v,x_w) \in R^\mI$ and $x_w \in C^\mI$. If at some later stage we need to make $x_w \in (\V R^-.D)^\mI$ (for example, because $(\V R^-.D) \in \Label(w)$) then we need to make $x_v \in D^\mI$, and hence we need to add $D$ to $\Label(v)$ as a requirement to be realized at $v$ if $D \notin \AFormulas(v)$. 
Similarly, if at some later stage we need to make $x_w \in (\V S.D)^\mI$, where $R^- \sqsubseteq_\mR S$ and $S \circ S \sqsubseteq S \in \Ext(\mR)$, then we need to make $x_v \in (\V S.D)^\mI$, and hence we need to add $\V S.D$ to $\Label(v)$ as a requirement to be realized at $v$ if $\V S.D \notin \AFormulas(v)$. 
  \begin{itemize}
  \item If $x_v \in D^\mI$ (where $D$ may be of the form $\V S.D'$) is a requirement but $D \notin \AFormulas(v)$ then we record this by setting $\ConvMethod(v) := 0$ and add $D$ to the set $\FmlsRC(v)$. If $\FmlsRC(v) \cap \DFormulas(v) \neq \emptyset$ then the requirements at $v$ are unrealizable and we set $\Status(v) := \Unsat$ (which means $\FullLabel(v)$ is unsatisfiable w.r.t.\ $\mR$ and~$\mT$). 
If $\FmlsRC(v) \neq \emptyset$ and $\FmlsRC(v) \cap \DFormulas(v) = \emptyset$ then we set $\Status(v) := \Incomplete$, which means the set $\Label(v)$ should be extended with $\FmlsRC(v)$ if $v$ will be used. 

  \item Consider the case when the computed set $\FmlsRC(v)$ is empty. In this case, we set $\ConvMethod(v) := 1$. Each node $w_i$ in the local graph of $w$ is an ``or''-descendant of $w$ and corresponds to the same $x_w \in \Delta^\mI$ (for example, if $C_1 \mor C_2 \in \Label(w)$ then we may make $w$ an ``or''-node with two successors $w_1$ and $w_2$ such that $C_1 \in \Label(w_1)$ and $C_2 \in \Label(w_2)$). 
     \begin{itemize}
     \item Consider the case $(\V R^-.D) \in \Label(w_i)$. Thus, $x_w \in (\V R^-.D)^\mI$ is one of possibly many alternative requirements (because $w_i$ is one of possibly many ``or''-descendants of $w$). If $w_i$ should be selected for representing $w$ and $D \notin \AFormulas(v)$ then we should add $D$ to $\Label(v)$. If $D \in \DFormulas(v)$ then we set $\Status(w_i) := \Unsat$, which means the ``combination'' of $v$ and $w_i$ is unsatisfiable w.r.t.~$\mR$ and~$\mT$. 
     \item Consider the case when $(\V S.D) \in \Label(w_i)$, $R^- \sqsubseteq_\mR S$ and $S \circ S \sqsubseteq S \in \Ext(\mR)$. Thus, $x_w \in (\V S.D)^\mI$ is one of possibly many alternative requirements (because $w_i$ is one of possibly many ``or''-descendants of $w$). If $w_i$ should be selected for representing $w$ and $\V S.D \notin \AFormulas(v)$ then we should add $\V S.D$ to $\Label(v)$. If $\V S.D \in \DFormulas(v)$ then we set $\Status(w_i) := \Unsat$, which means the ``combination'' of $v$ and $w_i$ is unsatisfiable w.r.t.~$\mR$ and~$\mT$. 
     \end{itemize}
If, for $X = \Trans_\mR(\Label(w_i),R^-) \setminus \AFormulas(v)$, we have that $X \neq \emptyset$ and $X \cap \DFormulas(v) = \emptyset$, then we add $X$ (as an element) to the set $\AltFmlSetsSCP(w)$ and set $\Status(w_i) := \Incomplete$, which means that, if the ``or''-descendant $w_i$ should be selected for representing $w$ then $X$ should be added (as a set) to $\Label(v)$.
  \end{itemize}


\ShortVersion{
The case when $\Type(v) = \State$, $\SType(v) = \Complex$ and \mbox{$a\!:\!\E R.C \in \Label(v)$} can be dealt with in a similar way. See~\cite{nSHI-long} for details.
}
\LongVersion{
Now consider the case when $\Type(v) = \State$, $\SType(v) = \Complex$ and \mbox{$a\!:\!\E R.C \in \Label(v)$}. It is very similar to the previous one. 
We need to satisfy (the ABox) $\Label(v)$ in the interpretation $\mI$ under construction. To satisfy the formula $a\!:\!\E R.C$ in $\mI$ we make a transition from $v$ to $w$ with $\Label(w) = \{C\} \cup \Trans_\mR(\Label(v),a,R) \cup \mT$. The node $w$ corresponds to an element $x_w \in \Delta^\mI$ such that $(a^\mI,x_w) \in R^\mI$ and $x_w \in C^\mI$. If at some later stage we need to make $x_w \in (\V R^-.D)^\mI$ (for example, because $(\V R^-.D) \in \Label(w)$) then we need to make $a^\mI \in D^\mI$, and hence we need to add $a\!:\!D$ to $\Label(v)$ as a requirement to be realized at $v$ if $(a\!:\!D) \notin \AFormulas(v)$. 
Similarly, if at some later stage we need to make $x_w \in (\V S.D)^\mI$, where $R^- \sqsubseteq_\mR S$ and $S \circ S \sqsubseteq S \in \Ext(\mR)$, then we need to make $a^\mI \in (\V S.D)^\mI$, and hence we need to add $a\!:\!\V S.D$ to $\Label(v)$ as a requirement to be realized at $v$ if $a\!:\!\V S.D \notin \AFormulas(v)$. 
  \begin{itemize}
  \item If $a^\mI \in D^\mI$ (where $D$ may be of the form $\V S.D'$) is a requirement but $(a\!:\!D) \notin \AFormulas(v)$ then we record this by setting $\ConvMethod(v) := 0$ and add $a\!:\!D$ to the set $\FmlsRC(v)$. If $\FmlsRC(v) \cap \DFormulas(v) \neq \emptyset$ then the requirements at $v$ are unrealizable and we set $\Status(v) := \Unsat$ (which means $\FullLabel(v)$ is unsatisfiable w.r.t.\ $\mR$ and~$\mT$). 
If $\FmlsRC(v) \neq \emptyset$ and $\FmlsRC(v) \cap \DFormulas(v) = \emptyset$ then we set $\Status(v) := \Incomplete$, which means the set $\Label(v)$ should be extended with $\FmlsRC(v)$ if $v$ will be used. 

  \item Consider the case when the computed set $\FmlsRC(v)$ is empty. In this case, we set $\ConvMethod(v) := 1$. Each node $w_i$ in the local graph of $w$ is an ``or''-descendant of $w$ and corresponds to the same $x_w \in \Delta^\mI$. 
     \begin{itemize}
     \item Consider the case $(\V R^-.D) \in \Label(w_i)$. Thus, $x_w \in (\V R^-.D)^\mI$ is one of possibly many alternative requirements (because $w_i$ is one of possibly many ``or''-descendants of $w$). If $w_i$ should be selected for representing $w$ and $(a\!:\!D) \notin \AFormulas(v)$ then we should add $a\!:\!D$ to $\Label(v)$. If $(a\!:\!D) \in \DFormulas(v)$ then we set $\Status(w_i) := \Unsat$, which means the ``combination'' of $v$ and $w_i$ is unsatisfiable w.r.t.~$\mR$ and~$\mT$. 
     \item Consider the case when $(\V S.D) \in \Label(w_i)$, $R^- \sqsubseteq_\mR S$ and $S \circ S \sqsubseteq S \in \Ext(\mR)$. Thus, $x_w \in (\V S.D)^\mI$ is one of possibly many alternative requirements (because $w_i$ is one of possibly many ``or''-descendants of $w$). If $w_i$ should be selected for representing $w$ and $(a\!:\!\V S.D) \notin \AFormulas(v)$ then we should add $a\!:\!\V S.D$ to $\Label(v)$. If $(a\!:\!\V S.D) \in \DFormulas(v)$ then we set $\Status(w_i) := \Unsat$, which means the ``combination'' of $v$ and $w_i$ is unsatisfiable w.r.t.~$\mR$ and~$\mT$. 
     \end{itemize}

If, for $X = \Trans_\mR(\Label(w_i),R^-,a) \setminus \AFormulas(v)\}$, we have that $X \neq \emptyset$ and $X \cap \DFormulas(v) = \emptyset$, then we add $X$ (as an element) to the set $\AltFmlSetsSCP(w)$ and set $\Status(w_i) := \Incomplete$, which means that, if the ``or''-descendant $w_i$ should be selected for representing $w$ then $X$ should be added (as a set) to $\Label(v)$.
  \end{itemize}
} 


\ShortVersion{When a node $w$ gets status $\Incomplete$, $\Unsat$ or $\Sat$, the status of every predecessor $v$ of $w$ will be updated as shown in procedure $\UpdateStatus(v)$ defined in~\cite[page~10]{nSHI-long}.} 
\LongVersion{When a node $w$ gets status $\Incomplete$, $\Unsat$ or $\Sat$, the status of every predecessor $v$ of $w$ will be updated as shown in procedure $\UpdateStatus(v)$ defined on page~\pageref{proc: UpdateStatus}.} 
In particular:
\begin{itemize}
\item If $\Type(w) = \State$ and $\Status(w) = \Incomplete$ then $\BeforeFormingState(v)$ holds and $w$ is the only successor of $v$. In this case, the edge $(v,w)$ will be deleted and the node $v$ will be re-expanded by the converse rule $\rConv$ as shown in procedure $\ApplyConvRule$ given\LongVersion{ on page~\pageref{proc: ApplyConvRule}}\ShortVersion{ in~\cite[page~9]{nSHI-long}}. For the subcase $\ConvMethod(w) = 0$, we connect $v$ to a node with label $\Label(v) \cup \FmlsRC(w)$. Consider the subcase when $\ConvMethod(w) = 1$. Let $\AltFmlSetsSC(w) = \{\{\varphi_1\}$, \ldots, $\{\varphi_n\}$, $Z_1$, \ldots, $Z_m\}$, where $Z_1$, \ldots, $Z_m$ are non-singleton sets. We connect $v$ to successors $w_1$, \ldots, $w_{n+m}$ such that: for $1 \leq i \leq n$, $\Label(w_i) = \Label(v) \cup \{\varphi_i\}$, and for $n+1 \leq i \leq n+m$, $\Label(w_i) = \Label(v) \cup Z_i$. To restrict the search space, for $1 \leq i \leq n$, we add all $\varphi_j$ with $1 \leq j < i$ to $\DFormulas(w_i)$. This can be read as: at $v$ either allow to have $\varphi_1$ (by adding it to the attribute $\Label$), or disallow $\varphi_1$ (by adding it to the attribute $\DFormulas$) and allow $\varphi_2$, or disallow $\varphi_1$, $\varphi_2$ and allow $\varphi_3$, and so on. Similarly, for $n+1 \leq i \leq n+m$, we add all $\varphi_j$ with $1 \leq j \leq n$ to $\DFormulas(w_i)$.

\item If $\Type(v) = \State$ (i.e.\ $\Kind(v) = \AndNode$) and $v$ has a successor $w$ such that $\Status(w) = \Incomplete$ then we set $\AltFmlSetsSC(v) := \AltFmlSetsSCP(w)$ and set $\Status(v) := \Incomplete$.
\end{itemize}


Application of a tableau rule $\rho$ to a node $v$ is specified by procedure $\Apply(\rho, v)$ given\LongVersion{ on page \pageref{proc: Apply}}\ShortVersion{ in~\cite[page~8]{nSHI-long}}. This procedure uses procedures $\ApplyConvRule$ and $\ApplyTransRule$ given\LongVersion{ on page~\pageref{proc: ApplyConvRule}}\ShortVersion{ in~\cite[page~9]{nSHI-long}}. \LongVersion{Auxiliary functions are defined on page~\pageref{proc: NewSucc}. }Procedures used for updating and propagating statuses of nodes are defined\LongVersion{ on page~\pageref{proc: Tableau}}\ShortVersion{ in~\cite[page~10]{nSHI-long}}. 
The main function $\Tableau(\mR,\mT,\mA)$ is also defined\LongVersion{ on page~\pageref{proc: Tableau}}\ShortVersion{ in~\cite[page~10]{nSHI-long}}. It returns a rooted ``and-or'' graph called a {\em \CSHI-tableau} for the knowledge base $(\mR,\mT,\mA)$. 
The root of the graph is a complex node $\nu$ with $\Label(\nu) = \mA \cup \{(a\!:\!C)$ $\mid C \in \mT$ and $a$ is an individual occurring in $\mA\}$. \LongVersion{Also notice that trivial unsatisfiability and satisfiability are checked immediately for each newly created node.} 


\LongVersion{
\newcommand{\myhline}{\\[0.4ex] \hline \\[-1.6ex]}
\begin{figure}
\begin{center}
\begin{tabular}{c}
\begin{scriptsize}
\xymatrix{
*+[F]{\begin{tabular}{c}
	(1): $\rOrP$, or
	\myhline
	$a\!:\!F$, $L(a,b)$,\\
	$b\!:\!\E L.\lnot I$,\\
	$a\!:\!\varphi$, $b\!:\!\varphi$
      \end{tabular}}
\ar@{->}[r]
\ar@{->}[d]
&
*+[F]{\begin{tabular}{c}
	(3): $\rAndP$
	\myhline
	$a\!:\!F$, $L(a,b)$,\\
	$b\!:\!\E L.\lnot I$, 
	$b\!:\!\varphi$,\\
	$a\!:\!I \mand \V P.F$
      \end{tabular}}
\ar@{->}[r]
&
*+[F]{\begin{tabular}{c}
	(4): $\rHP$
	\myhline
	$a\!:\!F$, $L(a,b)$,\\
	$b\!:\!\E L.\lnot I$, 
	$b\!:\!\varphi$,\\
	$a\!:\!I$, $a\!:\!\V P.F$
      \end{tabular}}
\ar@{->}[r]
&
*+[F]{\begin{tabular}{c}
	(5): $\rV$
	\myhline
	$a\!:\!F$, $L(a,b)$,\\
	$b\!:\!\E L.\lnot I$, 
	$b\!:\!\varphi$,\\
	$a\!:\!I$, $a\!:\!\V P.F$,\\
	$a\!:\!\V L.F$
      \end{tabular}}
\ar@{->}[d]
\\
*+[F]{\begin{tabular}{c}
	(2)
	\myhline
	$a\!:\!F$, $a\!:\!\lnot F$,\\
	\ldots\\
	$\Unsat$
      \end{tabular}}
&
*+[F]{\begin{tabular}{c}
	(9): $\rAndP$
	\myhline
	$a\!:\!F$, $L(a,b)$,\\
	$b\!:\!\E L.\lnot I$,\\ 
	$a\!:\!I$, $a\!:\!\V P.F$,\\
	$a\!:\!\V L.F$,\\ 
	$b\!:\!F$, $b\!:\!\V P.F$,\\ 
	$b\!:\!\V L.F$,\\
        $b\!:\!I \mand \V P.F$
      \end{tabular}}
\ar@{->}[ld]
&
*+[F]{\begin{tabular}{c}
	(7): $\rOrP$, or
	\myhline
	$a\!:\!F$, $L(a,b)$,\\
	$b\!:\!\E L.\lnot I$, 
	$b\!:\!\varphi$,\\
	$a\!:\!I$, $a\!:\!\V P.F$,\\
	$a\!:\!\V L.F$,\\
	$b\!:\!F$, $b\!:\!\V P.F$,\\  
	$b\!:\!\V L.F$
      \end{tabular}}
\ar@{->}[l]
\ar@{->}[dr]
&
*+[F]{\begin{tabular}{c}
	(6): $\rHP$
	\myhline
	$a\!:\!F$, $L(a,b)$,\\
	$b\!:\!\E L.\lnot I$, 
	$b\!:\!\varphi$,\\
	$a\!:\!I$, $a\!:\!\V P.F$,\\
	$a\!:\!\V L.F$,\\ 
	$b\!:\!F$, $b\!:\!\V P.F$
      \end{tabular}}
\ar@{->}[l]
\\
*+[F]{\begin{tabular}{c}
	(10):\\
	\rFormingState
	\myhline
	$a\!:\!F$, $L(a,b)$,\\
	$b\!:\!\E L.\lnot I$,\\ 
	$a\!:\!I$, $a\!:\!\V P.F$,\\
	$a\!:\!\V L.F$,\\ 
	$b\!:\!F$, $b\!:\!\V P.F$,\\ 
	$b\!:\!\V L.F$,\\
	$b\!:\!I$, $b\!:\!\V P.F$
      \end{tabular}}
\ar@{->}[r]
&
*+[F=]{\begin{tabular}{c}
	(11): $\rTransP$
	\myhline
	$a\!:\!F$, $L(a,b)$,\\
	$b\!:\!\E L.\lnot I$,\\ 
	$a\!:\!I$, $a\!:\!\V P.F$,\\
	$a\!:\!\V L.F$,\\ 
	$b\!:\!F$, $b\!:\!\V P.F$,\\ 
	$b\!:\!\V L.F$,\\
	$b\!:\!I$, $b\!:\!\V P.F$
      \end{tabular}}
\ar@{->}[r]
&
*+[F]{\begin{tabular}{c}
	(12): $\rOr$, or
	\myhline
	$\lnot I$, $F$, $\V P.F$,\\
	$\varphi$
      \end{tabular}}
\ar@{->}[dl]
\ar@{->}[dr]
&
*+[F]{\begin{tabular}{c}
	(8): $\rBotP$
	\myhline
	$b\!:\!F$, $b\!:\!\lnot F$,\\
	\ldots\\
	$\Unsat$
      \end{tabular}}
\\
*+[F]{\begin{tabular}{c}
	(15): $\rBot$
	\myhline
	$\lnot I$, $F$, $\V P.F$,\\
	$I$, $\V P.F$\\
	$\Unsat$
      \end{tabular}}
&
*+[F]{\begin{tabular}{c}
	(14): $\rAnd$
	\myhline
	$\lnot I$, $F$, $\V P.F$,\\
	$I \mand \V P.F$
      \end{tabular}}
\ar@{->}[l]
&
&
*+[F]{\begin{tabular}{c}
	(13): $\rBot$
	\myhline
	$\lnot I$, $F$, $\V P.F$,\\
	$\lnot F$\\
	$\Unsat$
      \end{tabular}}
} 
\end{scriptsize}
\end{tabular}
\end{center}
\caption{An ``and-or'' graph for the knowledge base $(\mR,\mT,\mA')$, where $\mR = \{L \sqsubseteq P$, $P \circ P \sqsubseteq P\}$, $\mT = \{\varphi\}$, $\mA' = \{a\!:\!F$, $L(a,b)$, $b\!:\!\E L.\lnot I\}$, and $\varphi = \lnot F \mor (I \mand \V P.F)$. 
In each node, we display the name of the rule expanding the node and the formulas of the node's label. The node (11) is the only state. We have, for example, $\StatePred((15)) = (11)$, $\AfterTransPred((15)) = (12)$ and $\CELabel((12)) = b\!:\!\E L.\lnot I$.}
\label{fig: example}
\end{figure}

\newcommand{\perfect}{\mathit{perfect}}
\newcommand{\interesting}{\mathit{interesting}}
\newcommand{\link}{\mathit{link}}
\newcommand{\Path}{\mathit{path}}

\begin{example} \label{example1}
This is an example about web pages, taken from~\cite{NguyenS10TCCI} and adapted to our calculus. Let
\begin{eqnarray*}
\mR & = & \{\link \sqsubseteq \Path,\; \Path \circ \Path \sqsubseteq \Path\}\\
\mT & = & \{\perfect \sqsubseteq \interesting \mand \V\Path.\perfect\}\\
\mA & = & \{ a\!:\!\perfect, \link(a,b) \}
\end{eqnarray*}
It can be shown that $b$ is an instance of the concept $\V\link.\interesting$ w.r.t.\ the knowledge base $(\mR,\mT,\mA)$, i.e., for every
model $\mI$ of $(\mR,\mT,\mA)$, we have that $b^\mI \in (\V\link.\interesting)^\mI$. To prove this one can show that the knowledge base
$(\mR,\mT,\mA')$, where $\mA' = \mA \cup \{b\!:\!\E\link.\lnot\interesting\}$, is unsatisfiable.
As abbreviations, let $L = \link$, $P = \Path$, $I = \interesting$, $F = \perfect$, and $\varphi = \lnot F \mor (I \mand \V P.F)$. We have
\begin{eqnarray*}
\mR & = & \{L \sqsubseteq P,\; P \circ P \sqsubseteq P\} \\
\mT & = & \{\varphi\} \textrm{ (in NNF).} \\
\mA' & = & \{a\!:\!F,\; L(a,b),\; b\!:\!\E L.\lnot I\}. 
\end{eqnarray*}
An ``and-or'' graph for $(\mR,\mT,\mA')$ is presented in Figure~\ref{fig: example}. 
\myEnd
\end{example}
} 


\LongVersion{
\begin{example}\label{ex:andorgraph}
Let 
\begin{eqnarray*}
\mR & = & \{r \sqsubseteq s,\ \ r^- \sqsubseteq s,\ \ s \circ s \sqsubseteq s\}\\
\mT & = & \{\E r.(A \mand \V s.\lnot A)\}\\
\mA & = & \{ a\!:\!\top \}.
\end{eqnarray*}
In Figures~\ref{fig: example2} and \ref{fig: example2-II} we give an ``and-or'' graph for the knowledge base $(\mR,\mT,\mA)$. 
The nodes are numbered when created and are expanded using~DFS (depth-first search). At the end the root receives status $\Unsat$. Therefore, by Theorem~\ref{theorem: s-c}, $(\mR,\mT,\mA)$ is unsatisfiable. As a consequence, $(\mR,\mT,\emptyset)$ is also unsatisfiable.  
\myEnd
\end{example}

\begin{figure}
\begin{center}
\begin{tabular}{c}
\begin{scriptsize}
\begin{tabular}{c@{\extracolsep{10em}}c}
(a) & (b) \\
\\
\xymatrix{
*+[F]{\begin{tabular}{c}
	(1): $\rFormingState$
	\myhline
	$a:\!:\!\top$, $a\!:\!\E r.(A \mand \V s.\lnot A)$
      \end{tabular}}
\ar@{->}[d] 
\\
*+[F=]{\begin{tabular}{c}
	(2): $\rTransP$
	\myhline
	$a:\!:\!\top$, $a\!:\!\E r.(A \mand \V s.\lnot A)$
      \end{tabular}}
\ar@{->}[d] 
\\
*+[F]{\begin{tabular}{c}
	(3): $\rAnd$
	\myhline
	$A \mand \V s.\lnot A$
      \end{tabular}}
\ar@{->}[d]
\\
*+[F]{\begin{tabular}{c}
	(4): $\rHP$
	\myhline
	$A$, $\V s.\lnot A$
      \end{tabular}}
\ar@{->}[d]
\\
*+[F]{\begin{tabular}{c}
	(5)
	\myhline
	$A$, $\V s.\lnot A$,\\
	$\V r.\lnot A$, $\V r^-.\lnot A$
      \end{tabular}}
}
&
\xymatrix{
*+[F]{\begin{tabular}{c}
	(1): $\rFormingState$
	\myhline
	$a:\!:\!\top$, $a\!:\!\E r.(A \mand \V s.\lnot A)$
      \end{tabular}}
\ar@{->}[d] 
\\
*+[F=]{\begin{tabular}{c}
	(2): $\rTransP$
	\myhline
	$a:\!:\!\top$, $a\!:\!\E r.(A \mand \V s.\lnot A)$\\
	$\Incomplete$\\
	$\ConvMethod = 0$\\
	$\FmlsRC$ : $a\!:\!\lnot A$, $a\!:\!\V s.\lnot A$
      \end{tabular}}
\ar@{->}[d] 
\\
*+[F]{\begin{tabular}{c}
	(3): $\rAnd$
	\myhline
	$A \mand \V s.\lnot A$
      \end{tabular}}
\ar@{->}[d]
\\
*+[F]{\begin{tabular}{c}
	(4): $\rHP$
	\myhline
	$A$, $\V s.\lnot A$
      \end{tabular}}
\ar@{->}[d]
\\
*+[F]{\begin{tabular}{c}
	(5)
	\myhline
	$A$, $\V s.\lnot A$,\\
	$\V r.\lnot A$, $\V r^-.\lnot A$
      \end{tabular}}
}
\end{tabular}
\end{scriptsize}
\end{tabular}
\end{center}
\caption{\label{fig: example2}An illustration for Example~\ref{ex:andorgraph}: part I. The graph (a) is the ``and-or'' graph constructed until checking ``compatibility'' of the node (5) w.r.t. to the node (2). 
In each node, we display the name of the rule expanding the node and the formulas of the node's label. The node (2) is the only state. We have, for example, $\StatePred((5)) = (2)$, $\AfterTransPred((5)) = (3)$ and $\CELabel((3)) = a\!:\!\E r.(A \mand \V s.\lnot A)$. Checking ``compatibility'' of the node (5) w.r.t. to the node (2), $\Status((2))$ is set to $\Incomplete$ and $\FmlsRC((2))$ is set to $\{a\!:\!\lnot A, a\!:\!\V s.\lnot A\}$. This results in the graph~(b). The construction is then continued by applying the rule $\rConv$ to (1). See Figure~\ref{fig: example2-II} for the continuation.} 
\end{figure}

\begin{figure}
\begin{center}
\begin{tabular}{c}
\begin{scriptsize}
\xymatrix{
*+[F]{\begin{tabular}{c}
	(1): $\rConv$
	\myhline
	$a:\!:\!\top$, $a\!:\!\E r.(A \mand \V s.\lnot A)$
      \end{tabular}}
\ar@{->}[dr] 
\\
*+[F=]{\begin{tabular}{c}
	(2): $\rTransP$
	\myhline
	$a:\!:\!\top$, $a\!:\!\E r.(A \mand \V s.\lnot A)$\\
	$\Incomplete$\\
	$\ConvMethod = 0$\\
	$\FmlsRC$ : $a\!:\!\lnot A$, $a\!:\!\V s.\lnot A$
      \end{tabular}}
\ar@{->}[d] 
&
*+[F]{\begin{tabular}{c}
	(6): $\rHP$
	\myhline
	$a:\!:\!\top$, $a\!:\!\E r.(A \mand \V s.\lnot A)$,\\ 
	$a\!:\!\lnot A$, $a\!:\!\V s.\lnot A$
      \end{tabular}}
\ar@{->}[d] 
\\
*+[F]{\begin{tabular}{c}
	(3): $\rAnd$
	\myhline
	$A \mand \V s.\lnot A$
      \end{tabular}}
\ar@{->}[d]
&
*+[F]{\begin{tabular}{c}
	(7): $\rFormingState$
	\myhline
	$a:\!:\!\top$, $a\!:\!\E r.(A \mand \V s.\lnot A)$,\\ 
	$a\!:\!\lnot A$, $a\!:\!\V s.\lnot A$,\\
	$a\!:\!\V r.\lnot A$, $a\!:\!\V r^-.\lnot A$
      \end{tabular}}
\ar@{->}[d]
\\
*+[F]{\begin{tabular}{c}
	(4): $\rHP$
	\myhline
	$A$, $\V s.\lnot A$
      \end{tabular}}
\ar@{->}[d]
&
*+[F=]{\begin{tabular}{c}
	(8): $\rTransP$
	\myhline
	$a:\!:\!\top$, $a\!:\!\E r.(A \mand \V s.\lnot A)$,\\ 
	$a\!:\!\lnot A$, $a\!:\!\V s.\lnot A$,\\
	$a\!:\!\V r.\lnot A$, $a\!:\!\V r^-.\lnot A$
      \end{tabular}}
\ar@{->}[d]
\\
*+[F]{\begin{tabular}{c}
	(5)
	\myhline
	$A$, $\V s.\lnot A$,\\
	$\V r.\lnot A$, $\V r^-.\lnot A$
      \end{tabular}}
&
*+[F]{\begin{tabular}{c}
	(9): $\rAnd$
	\myhline
	$A \mand \V s.\lnot A$,\\ 
	$\lnot A$, $\V s.\lnot A$
      \end{tabular}}
\ar@{->}[d]
\\
&
*+[F]{\begin{tabular}{c}
	(10)
	\myhline
	$A$, $\V s.\lnot A$, $\lnot A$\\
	$\Unsat$
      \end{tabular}}
}
\end{scriptsize}
\end{tabular}
\end{center}
\caption{\label{fig: example2-II}An illustration for Example~\ref{ex:andorgraph}: part II. This is a fully expanded ``and-or'' graph for $(\mR,\mT,\mA)$. The node (1) is re-expanded by the rule $\rConv$. As in the part~I, in each node we display the name of the rule expanding the node and the formulas of the node's label. 
The nodes (2) and (8) are the only states. After the node (10) receives status $\Unsat$, the nodes (9)-(6) and (1) receive status $\Unsat$ in subsequent steps.
} 
\end{figure}
} 

\ShortVersion{See the long version \cite{nSHI-long} of this paper for examples of ``and-or'' graphs and a proof of the following theorem.}


\LongVersion{
Let $\closure(\mR,\mT,\mA)$ be the union of 
\begin{itemize}
\item the set of all formulas $C$ and $a\!:\!C$ such that $C$ is a concept occurring in $\mT$ or $\mA$ as a formula or a subformula and $a$ is an individual occurring in $\mA$ 
\item the set of all formulas $\V R.C$ and $a\!:\!\V R.C$ such that $a$ is an individual occurring in $\mA$ and there exists a role $S$ such that $R \sqsubseteq_\mR S$, $S \circ S \sqsubseteq S \in \Ext(\mR)$ and $\V S.C$ is a concept occurring in $\mT$ or $\mA$ as a formula or a subformula. 
\end{itemize}

The size of $\closure(\mR,\mT,\mA)$ is polynomial in the size of $(\mR,\mT,\mA)$, where the {\em size} of a set of formulas (resp.\ a knowledge base) is the sum of the lengths of its formulas (resp.\ formulas and axioms). 

\begin{lemma} \label{lemma: tab-prop}
Procedure $\Tableau(\mR,\mT,\mA)$ runs in $2^{O(n)}$ steps and returns a rooted ``and-or'' graph $G = (V,E,\nu)$ of size $2^{O(n)}$, where $n$ is the size of $\closure(\mR,\mT,\mA)$. 
Furthermore, for every $v \in V\,$:
\begin{enumerate}
\item \label{item: HHSKA 1}
the sets $\Label(v)$, $\RFormulas(v)$ and $\DFormulas(v)$ are subsets of $\closure(\mR,\mT,\mA)$
\item \label{item: HHSKA 2}
the local tree of $v$ is a DAG (directed acyclic graph).
\end{enumerate}
\end{lemma}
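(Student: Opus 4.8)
The plan is to prove the four claims in order of dependency, with item~1 serving as the foundation for the size and running-time bounds. Throughout, write $\Phi = \closure(\mR,\mT,\mA)$, so that $n = |\Phi|$ and any bound of the form $2^{O(|\Phi|)}$ is $2^{O(n)}$. First I would prove item~1 by induction on the construction of $G$, i.e.\ on the order in which nodes are created and their attribute sets are modified. The base case is the root $\nu$, whose label $\mA \cup \{a\!:\!C \mid C \in \mT,\ a$ occurs in $\mA\}$ lies in $\Phi$ by the first clause of the definition of $\closure$, while $\RFormulas(\nu) = \DFormulas(\nu) = \emptyset$. For the inductive step I would check that every way of filling the three attribute sets maps subsets of $\Phi$ to subsets of $\Phi$. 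For $\rAnd,\rOr,\rAndP,\rOrP$ this is subformula closure, so the conclusion subconcepts and the principal formula moved into $\RFormulas$ stay in $\Phi$; for $\rH,\rHP$ and the conclusions of $\rTrans,\rTransP,\rV$ it is exactly the second clause of $\closure$ together with the definition of $\Trans_\mR$, since a transferred formula is either some $D$ with $\V R.D \in \Phi$ (whence $D$ occurs as a subformula and lies in $\Phi$) or some $\V S.D \in \Phi$ with $\SRTR_\mR(\cdot,S)$. Finally I would verify that the converse machinery respects $\Phi$: the sets added to $\FmlsRC$, $\AltFmlSetsSC$ and $\AltFmlSetsSCP$ all have the form $\Trans_\mR(\Label(w),R^-)\setminus\AFormulas(\cdot)$ or $\Trans_\mR(\Label(w),R^-,a)\setminus\AFormulas(\cdot)$ with $\Label(w)\subseteq\Phi$ by induction, and $\ApplyConvRule$ only forms unions and adds to $\DFormulas$ elements $\varphi_i$ already drawn from those sets.

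Having item~1, the size bound follows from caching. By item~1 each node carries, for caching purposes, a triple $(\Label(v),\RFormulas(v),\DFormulas(v))$ of subsets of $\Phi$, of which there are at most $(2^{|\Phi|})^3 = 2^{O(n)}$. Global state caching forces distinct states to carry distinct triples, so there are at most $2^{O(n)}$ states. For non-states I would count by local graph: every non-state lies in exactly one local graph, namely that of the root or that of a transition-successor of a state; since there are at most $2^{O(n)}$ states and each state has at most $|\Phi|$ transition-successors, there are at most $2^{O(n)}$ local graphs, and caching inside each local graph again bounds its nodes by $2^{O(n)}$. Hence $|V| = 2^{O(n)}$ and $|E| \le |V|^2 = 2^{O(n)}$. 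For the running time I would observe that the main loop of $\Tableau$ and the recursion inside $\ApplyTransRule$ expand each node at most once, and that $\rConv$ re-expands a node at most once, because after $\rConv$ the node has no state-successor and the trigger in $\UpdateStatus$ cannot fire again; every call to $\FindProxy$, $\ToExpand$, $\PropagateStatus$ or $\UpdateStatus$ runs in time $2^{O(n)}$, and there are $2^{O(n)}$ expansions, giving $2^{O(n)}$ total steps.

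For item~2 I would exhibit a well-founded measure that strictly decreases along every edge internal to a local graph. Set $\mu(v) = \big(\,|\Phi\setminus\RFormulas(v)|,\ |\Phi\setminus\Label(v)|\,\big)$, ordered lexicographically. Internal edges are created only by the static rules and by $\rConv$, and never by $\rFormingState$, whose successor is a state and hence lies outside the local graph. The rules $\rAnd,\rOr,\rAndP,\rOrP$ move their principal formula into $\RFormulas$, where, by their applicability condition, it was absent; they therefore strictly enlarge $\RFormulas$ and decrease the first component. The rules $\rH,\rHP,\rV$ keep $\RFormulas$ fixed (the call passes $Y := \RFormulas(v)$) while strictly enlarging $\Label$ by a formula that, again by the applicability condition, was not in $\AFormulas(v)$, decreasing the second component. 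Finally $\ApplyConvRule$ keeps $\RFormulas$ fixed and strictly enlarges $\Label$ by $\FmlsRC(w)$, or by some $\varphi_i$ or non-singleton $X$, each nonempty and disjoint from $\Label(v)$ since it excludes the $\AFormulas$ of the state. As $\mu$ takes values in a well-founded set and strictly decreases along internal edges, a local graph contains no directed cycle and is therefore a DAG.

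The main obstacle I expect lies in the verification underpinning item~1 and in the $\rConv$ case of item~2: one must track, through $\Apply$, $\ApplyConvRule$ and $\ApplyTransRule$, exactly which formulas land in $\Label$, $\RFormulas$ and $\DFormulas$, and confirm simultaneously that they remain within $\Phi$ and that each converse-induced successor is strictly smaller in the $\mu$-order. The delicate point is the interaction of role hierarchies with transitivity inside $\Trans_\mR$ and in the second clause of $\closure$, which is precisely what makes the hierarchy rules $\rH,\rHP$ preserve closure membership; once these are settled, the counting and timing arguments are routine consequences of global caching.
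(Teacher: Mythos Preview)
Your proposal is correct and follows essentially the same approach as the paper's own proof: item~1 by inspection of the rules against the definition of $\closure$, item~2 by a monotonicity argument along static edges, and the size/time bounds from caching together with the observation that each node is re-expanded at most once by $\rConv$. The paper's proof is simply much terser---it declares item~1 ``clear'' and for item~2 records exactly the three observations your measure $\mu$ encodes, namely that along any static edge $\RFormulas$ and $\AFormulas$ are monotone, with $\RFormulas$ strictly growing under $\rAnd,\rOr,\rAndP,\rOrP$ and $\AFormulas$ strictly growing under $\rH,\rHP,\rV,\rConv$; your lexicographic measure on $(|\Phi\setminus\RFormulas|,|\Phi\setminus\Label|)$ is an equivalent packaging, since for the latter rules $\RFormulas$ is unchanged and hence growth of $\AFormulas$ coincides with growth of $\Label$.
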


\begin{proof}
The assertion \ref{item: HHSKA 1} should be clear. For the assertion \ref{item: HHSKA 2}, just observe that:  
\begin{itemize}
\item \label{item: HHSKA 2a}
if $v$ is expanded by a static rule and $w$ is a successor of $v$ then $\RFormulas(v) \subseteq \RFormulas(w)$ and $\AFormulas(v) \subseteq \AFormulas(w)$ and $\DFormulas(v) \subseteq \DFormulas(w)$ 

\item \label{item: HHSKA 2b}
if $v$ is expanded by a static rule $\rho \notin \{\rH$, $\rHP$, $\rV$, $\rConv$, $\rFormingState\}$ and $w$ is a successor of $v$ then $\RFormulas(v) \subset \RFormulas(w)$

\item \label{item: HHSKA 2c}
if $v$ is expanded by a rule $\rho \in \{\rH,\rHP,\rV,\rConv\}$ and $w$ is a successor of $v$ then $\AFormulas(v) \subset \AFormulas(w)$.
\end{itemize}
Note that, each tableau node is re-expanded at most once, by using the rule $\rConv$. It is easy to see that $G$ has size $2^{O(n)}$ and can be constructed in $2^{O(n)}$ steps. 
\myEnd
\end{proof}
} 

\begin{theorem}
\label{theorem: s-c}
Let $(\mR,\mT,\mA)$ be a knowledge base in NNF of the logic \SHI.
Then procedure $\Tableau(\mR,\mT,\mA)$\ShortVersion{ given in~\cite{nSHI-long}} runs in exponential time (in the worst case) in the size of $(\mR,\mT,\mA)$ and returns a rooted ``and-or'' graph $G = (V,E,\nu)$ such that $(\mR,\mT,\mA)$ is satisfiable iff $\Status(\nu) \neq \Unsat$. 
\myEnd
\end{theorem}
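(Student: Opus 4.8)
The plan is to establish three things: the exponential time bound, the soundness of the $\Unsat$ verdict, and its completeness. The complexity claim is immediate from Lemma~\ref{lemma: tab-prop}: since the size $n$ of $\closure(\mR,\mT,\mA)$ is polynomial in the size of $(\mR,\mT,\mA)$, the bound $2^{O(n)}$ on the running time is exponential in the size of the knowledge base. It thus remains to prove the equivalence that $(\mR,\mT,\mA)$ is satisfiable iff $\Status(\nu) \neq \Unsat$, which I would split into the two implications below. Throughout I would use the \emph{full label} $\FullLabel(v) = \Label(v) \cup \RFormulas(v) \cup \NDFormulas(v)$ as the semantic content attached to a node, noting that for the root $\nu$ we have $\RFormulas(\nu) = \DFormulas(\nu) = \emptyset$, so $\FullLabel(\nu) = \Label(\nu)$ is satisfiable w.r.t.\ $\mR$ and $\mT$ exactly when $(\mR,\mT,\mA)$ is satisfiable.

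For the direction ``satisfiable $\Rightarrow \Status(\nu) \neq \Unsat$'' (soundness of the $\Unsat$ verdict) I would prove the invariant: \emph{for every $v \in V$, if $\Status(v) = \Unsat$ then $\FullLabel(v)$ is unsatisfiable w.r.t.\ $\mR$ and $\mT$}, by induction on the order in which nodes receive status $\Unsat$. The base case is when $\TUnsat(v)$ holds, where $\bot \in \Label(v)$ or a clashing pair $\{\varphi,\ovl{\varphi}\}$ lies in $\Label(v)$, making the label trivially unsatisfiable. For the inductive step I would verify a \emph{local soundness} property matched to the node kind: at an $\OrNode$ receiving $\Unsat$ all successors are $\Unsat$, and each static rule has the property that satisfiability of its premise forces satisfiability of some conclusion, so by the induction hypothesis the premise's full label is unsatisfiable; at an $\AndNode$ receiving $\Unsat$ some successor is $\Unsat$, and the transitional rules $\rTrans,\rTransP$ have the property that satisfiability of the premise forces satisfiability of \emph{every} conclusion. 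The delicate cases come from the converse machinery: when $\Status$ is set to $\Unsat$ because $\FmlsRC \cap \DFormulas \neq \emptyset$ (or a descendant is set $\Unsat$ for the analogous reason), I would argue that the required formula together with its disallowed negation, recorded in $\NDFormulas$, is unsatisfiable; and when $\rConv$ re-expands a node, I would check that the labels built in $\ApplyConvRule$, together with the $\DFormulas$ bookkeeping, preserve the invariant.

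For the converse direction ``$\Status(\nu) \neq \Unsat \Rightarrow$ satisfiable'' (completeness) I would construct a model from the final graph. The status propagation guarantees the root ends with status $\Sat$ or $\Unsat$ (the status $\Incomplete$ is transient and is consumed at the enclosing non-state by the rule $\rConv$); since it is not $\Unsat$, it is $\Sat$. I would first fix a \emph{marking}: starting from $\nu$, at each $\Sat$ $\OrNode$ select one $\Sat$ successor, and at each $\Sat$ $\AndNode$ keep all successors, obtaining a subgraph in which every node is $\Sat$ and every existential demand recorded by $\CELabel$ is realized. The domain of $\mI$ would be built from the reachable states of this marking, distinguishing the complex state selected by the marking, whose label fixes the interpretation of the named individuals of $\mA$, from the simple states, which supply the anonymous witnesses. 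I would set $A^\mI = \{x_v \mid A \in \Label(v)\}$, read the role assertions off the complex part, define each role relation from the transitions realizing the $\E R.C$ and $a\!:\!\E R.C$ formulas, and finally close the base relations under the subrole axioms of $\Ext(\mR)$ and under composition for the transitive roles. Correctness of $\mI$ would follow from a truth lemma by induction on concept structure: $\top,\bot,A,\mand,\mor$ are handled by $\rAnd,\rOr$ and local consistency, $\E R.C$ by the marking, and $\V S.C$ with $S$ transitive is propagated along $S$-edges and $S$-subroles by the operator $\Trans_\mR$ and the rules $\rH,\rHP$, so that the transitive, hierarchy-closed relations still satisfy all universal restrictions.

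The main obstacle I expect is the treatment of \emph{inverse roles} in the model construction, i.e.\ verifying that the witness selected by the marking at a state is genuinely compatible with the chosen label of its state-predecessor. Here the two converse methods must be reconciled with the model: for $\ConvMethod = 0$ the formulas of $\FmlsRC$ were folded back into the predecessor's label by $\rConv$, so compatibility is automatic; for $\ConvMethod = 1$ the alternative sets in $\AltFmlSetsSC$, the propagated $\AltFmlSetsSCP$, and the $\DFormulas$-based case split of $\ApplyConvRule$ must be shown to guarantee that the branch surviving in the marking picks, for each realized existential, an $R^-$-transfer already contained in the state's available formulas. Establishing that this bookkeeping makes $\V R^-.D$, and the transitive $\V S.D$ with $R^- \sqsubseteq_\mR S$, true back at the predecessor without circularity across the globally cached states is the technical heart of the argument; the transitive-role and role-hierarchy closures then have to be shown to coexist with these inverse constraints, which is exactly what the $\SRTR_\mR$ side-conditions inside $\Trans_\mR$ are designed to ensure.
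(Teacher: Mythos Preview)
Your completeness argument contains a genuine gap: the claim that ``status propagation guarantees the root ends with status $\Sat$ or $\Unsat$'' is false. The set of statuses is $\{\Unexpanded,\Expanded,\Incomplete,\Unsat,\Sat\}$, and after the main loop terminates a node may perfectly well remain $\Expanded$. This happens precisely when global state caching creates cycles: a state $u$ can have a successor whose saturation eventually reconnects (via $\rFormingState$) to $u$ itself, so neither $u$ nor the nodes on that path ever acquire a determined status. Your marking, which selects $\Sat$ successors at each or-node, therefore need not exist. The paper avoids this by working not with $\Sat$ nodes but with \emph{saturation paths}, sequences of nodes whose status lies outside $\{\Unsat,\Incomplete\}$ (hence possibly $\Expanded$), after first showing (Lemma~\ref{lemma: GSDHE}) that no $\Incomplete$ node is reachable from $\nu$. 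The model is then extracted as a \emph{model graph} (Hintikka structure) whose points are built from the terminal states of such saturation paths, with an auxiliary map $f$ from model-graph points to states of $G$; a separate lemma shows that the $\mR$-model corresponding to a consistent $\mR$-saturated model graph really is a model.

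Your soundness sketch is headed in the right direction but understates the invariant needed. Proving only ``$\Status(v)=\Unsat$ implies $\FullLabel(v)$ is unsatisfiable'' does not go through on its own, because when $v$ is a non-state with $\StatePred(v)=v_0\neq\Null$, the reason $v$ becomes $\Unsat$ may refer to $\DFormulas(v_0)$ and to the edge $\CELabel(\AfterTransPred(v))$, not just to $\FullLabel(v)$. The paper therefore proves a stronger, mutually inductive statement (Lemma~\ref{lemma: SHQWD}) with clauses for states, for non-states with $\StatePred=\Null$, for non-states coupled to their state-predecessor via the $R$-edge recorded in $\CELabel$, and additionally for $\Incomplete$ states (distinguishing $\ConvMethod=0$ and $\ConvMethod=1$). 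The $\Incomplete$ clauses are what make the $\rConv$ re-expansion sound: one shows that $\FullLabel(v)\cup\{\NegCnj(\FmlsRC(v))\}$ (resp.\ $\FullLabel(v)\cup\NegAll(\AltFmlSetsSC(v))$) is unsatisfiable, and this is exactly what justifies that the new successors created by $\ApplyConvRule$ jointly cover the satisfiable cases. Without these auxiliary clauses the induction for or-nodes re-expanded by $\rConv$ does not close.
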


\LongVersion{The complexity issue was addressed by Lemma~\ref{lemma: tab-prop}. For the remaining assertion, see the proofs given in the next section.}
\VSpace{-1ex}


\LongVersion{
\section{Proofs of Soundness and Completeness of \CSHI}
\label{section: proofs}

\subsection{Soundness}

If $X = \{C_1,\ldots,C_n\}$ then define $\Cnj(X) = C_1 \mand \ldots \mand C_n$. If $X = \{a\!:\!C_1, \ldots, a\!:\!C_n\}$ then define $\Cnj(X) = a\!:\!(C_1 \mand \ldots \mand C_n)$. Furthermore, we define $\NegCnj(X)$ to be the NNF of $\lnot\Cnj(X)$, and define $\NegAll(\{X_1,\ldots,X_k\}) = \{\NegCnj(X_1)$, \ldots, $\NegCnj(X_k)\}$.

Let $G$ be a \CSHI-tableau for $(\mR,\mT,\mA)$. 
For each node $v$ of $G$ with $\Status(v) \in \{\Incomplete$, $\Unsat$, $\Sat\}$, let $\DSTimeStamp(v)$ be the moment at which $\Status(v)$ was changed to its final value (i.e.\ determined to be $\Incomplete$, $\Unsat$ or $\Sat$). $\DSTimeStamp$ stands for ``determined-status time-stamp''. 
For each non-state $v$ of $G$, let $\ETimeStamp(v)$ be the moment at which $v$ was expanded the last time.\footnote{Recall that, each non-state may be re-expanded at most once, using the rule $\rConv$, and that, each state is expanded at most once.}

\begin{lemma} \label{lemma: SHQWD}
Let $G = (V,E,\nu)$ be a \CSHI-tableau for $(\mR,\mT,\mA)$. For every $v \in V$:
\begin{enumerate}
\item if $\Status(v) = \Unsat$ then
  \begin{enumerate}
  \item case $\Type(v) = \State$ : $\FullLabel(v)$ is unsatisfiable w.r.t.~$\mR$ and~$\mT$
  \item case $\Type(v) = \NonState$ and $\StatePred(v) = \Null$ : $\FullLabel(v)$ is unsatisfiable w.r.t.~$\mR$ and~$\mT$

  \item case $\Type(v) = \NonState$, $v_0 = \StatePred(v) \neq \Null$ and $\SType(v_0) = \Simple$: if $v_1 = \AfterTransPred(v)$ and $\CELabel(v_1)$ is of the form $\E R.C$ then there do not exist any model $\mI$ of both $\mR$ and $\mT$ and any elements $x,y \in \Delta^\mI$ such that $(x,y) \in R^\mI$, $x \in (\FullLabel(v_0))^\mI$ and $y \in (\Label(v))^\mI$

  \item case $\Type(v) = \NonState$, $v_0 = \StatePred(v) \neq \Null$ and $\SType(v_0) = \Complex$: if $v_1 = \AfterTransPred(v)$ and $\CELabel(v_1)$ is of the form $a\!:\!\E R.C$ then there do not exist any model $\mI$ of $(\mR,\mT,\FullLabel(v_0))$ and any element $y \in \Delta^\mI$ such that $(a^\mI,y) \in R^\mI$ and $y \in (\Label(v))^\mI$
  \end{enumerate}

\item if $\Status(v) = \Incomplete$ and $\Type(v) = \State$ then
  \begin{enumerate}
  \item case $\ConvMethod(v) = 0$: $\FullLabel(v) \cup \{\NegCnj(\FmlsRC(v))\}$ is unsatisfiable w.r.t.~$\mR$ and~$\mT$
  \item case $\ConvMethod(v) = 1$: $\FullLabel(v) \cup \NegAll(\AltFmlSetsSC(v))$ is unsatisfiable w.r.t.~$\mR$ and~$\mT$
  \end{enumerate}

\item if $\Type(v) = \NonState$ and $w_1,\ldots,w_k$ are all the successors of $v$ then, for every model $\mI$ of $\mR$ and every $x \in \Delta^\mI$, 
  \begin{enumerate}
  \item case $\SType(v) = \Simple$: $x \in (\FullLabel(v))^\mI$ iff there exists $1 \leq i \leq k$ such that $x \in (\FullLabel(w_i))^\mI$
  \item case $\SType(v) = \Complex$: $\mI$ is a model of $\FullLabel(v)$ iff there exists $1 \leq i \leq k$ such that $\mI$ is a model of $\FullLabel(w_i)$.
  \end{enumerate}
\end{enumerate}
\end{lemma}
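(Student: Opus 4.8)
The plan is to prove the three assertions together, the static content of assertion~3 serving as a lemma inside a single induction on the moment at which statuses become final. Assertion~3 for every non-state $v$ whose last expansion used a rule other than $\rConv$ can be settled \emph{unconditionally}, by case analysis on that rule, since it mentions no statuses. For the Boolean rules $\rAnd,\rOr,\rAndP,\rOrP$ the claimed equivalence is immediate from $(C \mand D)^\mI = C^\mI \cap D^\mI$ and $(C \mor D)^\mI = C^\mI \cup D^\mI$, using that the principal formula is copied into $\RFormulas$ of every successor so that $\AFormulas$, and hence $\FullLabel$ (since $\DFormulas$ is unchanged), is preserved up to logical equivalence. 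For $\rH,\rHP,\rV$ one uses that $\mI$ is a model of $\mR$: if $R \sqsubseteq_\mR S$ then $R^\mI \subseteq S^\mI$, so $\V S.C$ entails $\V R.C$, and if in addition $S \circ S \sqsubseteq S \in \Ext(\mR)$ then the propagation carried out by $\Trans_\mR$ along a role assertion is validity-preserving over models of $\mR$; in each case the unique successor's full label is $\mR$-equivalent to that of $v$. For $\rFormingState$ the single successor is a state with identical $\Label$, $\RFormulas$ and $\DFormulas$, so $\FullLabel$ is literally unchanged.

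Assertions~1, 2 and the $\rConv$-case of assertion~3 I would then prove by a simultaneous induction on $\DSTimeStamp(v)$ (for assertions~1,2) and on the dependency order of the $\rConv$-re-expansions (for assertion~3), splitting on the event that fixed the status. If $\TUnsat(v)$ held, then $\Label(v) \subseteq \FullLabel(v)$ contains $\bot$ or a clash $\{\varphi,\ovl\varphi\}$, so cases~(a),(b) follow at once and (c),(d) hold vacuously since no $y$ can satisfy $\Label(v)$. If the status was set by $\UpdateStatus$ at an or-node whose successors $w_1,\dots,w_k$ are all $\Unsat$, the successors share $v$'s state-predecessor $v_0$ and after-transition-predecessor $v_1$ (hence the same $\CELabel(v_1)$), so they fall under the same one of the cases (b)--(d); the rule-by-rule decomposition underlying assertion~3 (expressing how satisfaction of $\Label(v)$, resp.\ $\FullLabel(v)$, distributes over the $\Label(w_i)$, resp.\ $\FullLabel(w_i)$) then transfers their joint unsatisfiability to $v$. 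If the status was set at an and-node (a state) because some successor $w$ is $\Unsat$, then $w$ is an after-transition node realizing an existential $\CELabel(w) = \E R.C$ (resp.\ $a\!:\!\E R.C$): any $x \in (\FullLabel(v))^\mI$ would have to realize it, producing an $R$-successor $y$ that necessarily satisfies $\{C\} \cup \Trans_\mR(\Label(v),R) \cup \mT = \Label(w)$, contradicting the induction hypothesis (c) (resp.\ (d)) for $w$; hence $\FullLabel(v)$ is unsatisfiable, giving~(a).

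The remaining cases are those produced by the converse bookkeeping in $\ApplyTransRule$ and $\Apply$, and they yield assertion~2. For $\ConvMethod(v) = 0$ I would show that every $x \in (\FullLabel(v))^\mI$ is forced, through the $\V R^-.D$ and transitive $\V S.D$ (with $R^- \sqsubseteq_\mR S$) restrictions of the realized successors reflecting back along $R$, to satisfy the conjunction of $\FmlsRC(v)$; this is exactly unsatisfiability of $\FullLabel(v) \cup \{\NegCnj(\FmlsRC(v))\}$, i.e.\ 2(a). For $\ConvMethod(v) = 1$ the several or-descendants of a transition target give alternative reflected requirement sets, and the same argument shows $x$ must satisfy the conjunction of some member of $\AltFmlSetsSC(v)$, i.e.\ 2(b). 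When a forced formula lies in $\DFormulas(v)$, its NNF negation lies in $\NDFormulas(v) \subseteq \FullLabel(v)$, producing the clash that the procedures turn into $\Unsat$ and that establishes~1(a). Finally, in the $\rConv$-case of assertion~3 the new successors of the $\rFormingState$-predecessor are built by $\ApplyConvRule$ from the converse data of its Incomplete state successor $w$, and the required equivalence follows by invoking the (already established, since $\DSTimeStamp(w)$ is strictly earlier) assertion~2 for $w$.

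The main obstacle is precisely this converse machinery for inverse roles: proving that the recorded $\FmlsRC(v)$ and $\AltFmlSetsSC(v)$ are \emph{semantically forced} by $\FullLabel(v)$, and that the allowed/disallowed branching of $\ApplyConvRule$ is locally sound. The delicate bookkeeping is that the disallowed formulas act at two different places at once — they constrain the state element through $\NDFormulas(v_0) \subseteq \FullLabel(v_0)$ while they trigger $\Unsat$-marking through the test $X \cap \DFormulas(v_0) \neq \emptyset$ in the compatibility check computing $X = \Trans_\mR(\Label(w),R^-) \setminus \AFormulas(v_0)$ — which is why cases~(c),(d) can use the bare label $\Label(v)$ of the successor while using the full label $\FullLabel(v_0)$ of the state. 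Reconciling this with the growth of $\DFormulas$ caused by nested $\rConv$ re-expansions inside a transition-target local graph, and checking that the ordering by $\DSTimeStamp$ together with the fact that a state's status is finalized strictly before its $\rFormingState$-predecessor is re-expanded by $\rConv$ keeps the induction well founded, is where inverse roles, transitive roles and role hierarchies genuinely interact and where the bulk of the care is required.
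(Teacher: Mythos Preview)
Your proposal is correct and follows essentially the same approach as the paper: a simultaneous induction on $\DSTimeStamp(v)$ and $\ETimeStamp(v)$, settling assertion~3 for the non-$\rConv$ rules by direct rule-by-rule analysis, deriving the $\rConv$ case of assertion~3 from assertion~2 applied to the (earlier-timestamped) state successor, and treating the various $\Unsat$ cases of assertion~1 by the same case splits on which step of $\Apply$/$\ApplyTransRule$/$\UpdateStatus$ fixed the status. The only notable difference is one of detail: the paper dismisses~2(a) as ``clear'' (it is, because in the $\ConvMethod=0$ phase only \emph{unary} rules are applied in the local graph, so every formula contributing to $\FmlsRC$ is deterministically forced), while for~2(b) the paper is more explicit than you are, spelling out that one needs the inductive instances of~1(c)/1(d) to discard the $\Unsat$ leaves of the local graph and then assertion~3(a) to descend to one of the $\Incomplete$ leaves~$w_i$; your phrase ``the same argument'' undersells this step, but your final paragraph shows you have identified exactly this as the crux.
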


\begin{proof}
We prove this lemma by induction on both $\DSTimeStamp(v)$ and $\ETimeStamp(v)$. 

Consider the assertion~3. It should be clear for the cases when the rule expanding $v$ is not $\rConv$. So, assume that $v$ was re-expanded by the rule $\rConv$ and let $w$ be the only successor of $v$ before the re-expansion. We must have that $\Type(w) = \State$, $\Status(w) = \Incomplete$, $\DSTimeStamp(w) < \ETimeStamp(v)$, $\Label(w) = \Label(v)$, $\RFormulas(w) = \RFormulas(v)$ and $\DFormulas(w) = \DFormulas(v)$. There are the following two cases:
\begin{itemize}
\item Case $\ConvMethod(w) = 0$: By the inductive assumption~2a, $\FullLabel(w) \cup \{\NegCnj(\FmlsRC(w))\}$ is unsatisfiable w.r.t.~$\mR$ and~$\mT$. It follows that $\FullLabel(v) \cup \{\NegCnj(\FmlsRC(w))\}$ is unsatisfiable w.r.t.~$\mR$ and~$\mT$. After re-expansion $v$ has only one successor $w'$, with $\Label(w') = \Label(v) \cup \FmlsRC(w)$, $\RFormulas(w') = \RFormulas(v)$ and $\DFormulas(w') = \DFormulas(v)$. Hence, the assertion~3 holds. 

\item Case $\ConvMethod(w) = 1$: By the inductive assumption~2b, $\FullLabel(w) \cup \NegAll(\AltFmlSetsSC(w))$ is unsatisfiable w.r.t.~$\mR$ and~$\mT$. It follows that $\FullLabel(v) \cup \NegAll(\AltFmlSetsSC(w))$ is unsatisfiable w.r.t.~$\mR$ and~$\mT$. Using this, it can be observed that Steps 5-12 of procedure $\ApplyConvRule$ guarantees the assertion~3.
\end{itemize}

Consider the assertion~1. 
If $\Status(v) = \Unsat$ because $\bot \in \Label(v)$ or there exists $\{\varphi,\ovl{\varphi}\} \subseteq \Label(v)$ then $\FullLabel(v)$ is clearly unsatisfiable w.r.t.~$\mR$ and~$\mT$. So, assume that $\Label(v)$ contains neither $\bot$ nor a pair $\{\varphi,\ovl{\varphi}\}$.   
 
Consider the assertion~1a and suppose that $\Status(v) = \Unsat$ and $\Type(v) = \State$. There are three cases: $\Status(v)$ was set to $\Unsat$ either by Step~27 of procedure $\Apply$ (with $v_0 = v$) or by Step~12 of procedure $\ApplyTransRule$ (with $u = v$) or by Step~10 of procedure $\UpdateStatus$. For the first two cases, we must have that $\ConvMethod(v) = 0$ and $\FmlsRC(v) \cap \DFormulas(v) \neq \emptyset$, which implies that $\FullLabel(v)$ is unsatisfiable w.r.t.~$\mR$ and~$\mT$. The intuition behind the last inference is that $\FullLabel(v) = \AFormulas(v) \cup \NDFormulas(v)$ and $\FmlsRC(v)$ is the set of formulas which must be added to $\AFormulas(v)$. Consider the third case. Thus, $v$ has a successor $w$ with status $\Status(w) = \Unsat$, $\Type(w) = \NonState$ and $\DSTimeStamp(w) < \DSTimeStamp(v)$. The inductive assumption 1c or 1d (depending on $\SType(w)$) holds for $w$ (in the place of $v$). If $\FullLabel(v)$ is satisfied in a model $\mI$ of $\mR$ and $\mT$, then $\mI$ will violate this inductive assumption. Hence $\FullLabel(v)$ is unsatisfiable w.r.t.~$\mR$ and~$\mT$. 

Consider the assertion~1b and suppose that $\Status(v) = \Unsat$ and $\Type(v) = \NonState$ and $\StatePred(v) = \Null$. Let $w_1, \ldots, w_k$ be all the successors of $v$. It must be that, for every $1 \leq i \leq k$, $\Status(w_i) = \Unsat$, $\Type(w_i) = \NonState$, $\StatePred(w_i) = \Null$ and $\DSTimeStamp(w_i) < \DSTimeStamp(v)$. By the inductive assumption~1b, for every $1 \leq i \leq k$, $\FullLabel(w_i)$ is unsatisfiable w.r.t.~$\mR$ and~$\mT$. By the inductive assumption~3, it follows that $\FullLabel(v)$ is unsatisfiable w.r.t.~$\mR$ and~$\mT$. 

Consider the assertions~1c and~1d and suppose that $\Status(v) = \Unsat$ and $\Type(v) = \NonState$ and $\StatePred(v) \neq \Null$. There are the following cases:
\begin{itemize}
\item Case $\Status(v)$ was set to $\Unsat$ by Step~28 of procedure $\Apply$: The condition $Y \cap \DFormulas(v_0) \neq \emptyset$ of that step implies the assertions~1c and~1d. 
\item Case $\Status(v)$ was set to $\Unsat$ by Step~3 of procedure $\UpdateStatus$: Let $w_1, \ldots, w_k$ be all the successors of $v$. It must be that, for every $1 \leq i \leq k$, $\Status(w_i) = \Unsat$, $\Type(w_i) = \NonState$, $\StatePred(w_i) = \StatePred(v) \neq \Null$ and $\DSTimeStamp(w_i) < \DSTimeStamp(v)$. The inductive assumptions~1c and~1d for $w_1, \ldots, w_k$ imply the inductive hypotheses~1c and~1d for~$v$. 
\end{itemize}

The assertion~2a should be clear.

Consider the assertion~2b and suppose that $\Status(v) = \Incomplete$, $\Type(v) = \State$ and $\ConvMethod(v) = 1$. There must exist a successor $w$ of $v$ such that $\AfterTrans_\mR(w)$ holds, $\Kind(w) = \OrNode$, $\Status(w) = \Incomplete$, and $\AltFmlSetsSCP(w) = \AltFmlSetsSC(v)$. Let $w_1,\ldots,w_k$ be all the nodes in the local graph of $w$ such that, for $1 \leq i \leq k$, $\Status(w_i) = \Incomplete$ and when $\Status(w_i)$ became $\Incomplete$ a set $X_i$ of formulas was added (as an element) into $\AltFmlSetsSCP(w)$ (i.e.~$w_i$ got status $\Incomplete$ not by propagation). The setting of $\Status(w_i)$ and the addition of $X_i$ to $\AltFmlSetsSCP(w)$ occur at Steps~30 and~31 of procedure $\Apply$. We have that $\AltFmlSetsSCP(w) = \{X_1,\ldots,X_k\}$. Note that, since $\Status(w) = \Incomplete$, $k \geq 1$ and every node in the local graph of $w$ must have status $\Incomplete$ or $\Unsat$. 
There are the following two cases: 
\begin{itemize}
\item Case $\SType(v) = \Simple$: Let $\CELabel(w) = \E R.C$. For the sake of contradiction, suppose there exists a model $\mI$ of $\mR$ and $\mT$ such that $(\FullLabel(v) \cup \NegAll(\AltFmlSetsSC(v)))^\mI$ is not empty and contains an element $x$. Since $\CELabel(w) \in \Label(v)$, there exists $y \in \Delta^\mI$ such that $(x,y) \in R^\mI$ and $y \in C^\mI$. Thus, $y \in (\Label(w))^\mI$, and hence $y \in (\FullLabel(w))^\mI$ (since $\RFormulas(w) = \DFormulas(w) = \emptyset$). For every node $w'$ in the local graph of $w$ with $\Status(w') = \Unsat$, we have that $\DSTimeStamp(w') < \DSTimeStamp(v)$, and by the inductive assumption~1c, $y \notin (\Label(w'))^\mI$, and hence $y \notin (\FullLabel(w'))^\mI$. Since $y \in (\FullLabel(w))^\mI$, by the inductive assumption~3a, it follows that there exists $1 \leq i \leq k$ such that $y \in (\FullLabel(w_i))^\mI$. 
Since $X_i = \Trans_\mR(\Label(w_i),R^-)$ and $(x,y) \in R^\mI$, it follows that $x \in X_i^\mI$, which contradicts the fact that $x \in (\NegAll(\AltFmlSetsSC(v)))^\mI$. Therefore $\FullLabel(v) \cup \NegAll(\AltFmlSetsSC(v))$ must be unsatisfiable w.r.t.~$\mR$ and~$\mT$.

\item Case $\SType(v) = \Complex$: Let $\CELabel(w) = a\!:\!\E R.C$. For the sake of contradiction, suppose there exists a model $\mI$ of $\mR$, $\mT$ and $\FullLabel(v) \cup \NegAll(\AltFmlSetsSC(v))$. Since $\CELabel(w) \in \Label(v)$, there exists $y \in \Delta^\mI$ such that $(a^\mI,y) \in R^\mI$ and $y \in C^\mI$. Thus, $y \in (\Label(w))^\mI$, and hence $y \in (\FullLabel(w))^\mI$ (since $\RFormulas(w) = \DFormulas(w) = \emptyset$). For every node $w'$ in the local graph of $w$ with $\Status(w') = \Unsat$, we have that $\DSTimeStamp(w') < \DSTimeStamp(v)$, and by the inductive assumption~1d, $y \notin (\Label(w'))^\mI$, and hence $y \notin (\FullLabel(w'))^\mI$. Since $y \in (\FullLabel(w))^\mI$, by the inductive assumption~3a, it follows that there exists $1 \leq i \leq k$ such that $y \in (\FullLabel(w_i))^\mI$. Since $X_i = \Trans_\mR(\Label(w_i),R^-,a)$ and $(a^\mI,y) \in R^\mI$, it follows that $\mI$ is a model of (the ABox) $X_i$, which contradicts the fact that $\mI$ is a model of (the ABox) $\NegAll(\AltFmlSetsSC(v))$. Therefore $\FullLabel(v) \cup \NegAll(\AltFmlSetsSC(v))$ must be unsatisfiable w.r.t.~$\mR$ and~$\mT$.
\myEnd
\end{itemize}
\end{proof}

\begin{corollary}[Soundness of \CSHI]
If $G = (V,E,\nu)$ is a \CSHI-tableau for $(\mR,\mT,\mA)$ and $\Status(\nu) = \Unsat$ then $(\mR,\mT,\mA)$ is unsatisfiable.
\myEnd
\end{corollary}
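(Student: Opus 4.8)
The plan is to split the theorem into its three assertions and dispatch them separately. The complexity bound and the claim that $\Tableau(\mR,\mT,\mA)$ returns a rooted ``and-or'' graph are exactly Lemma~\ref{lemma: tab-prop}: since the size of $\closure(\mR,\mT,\mA)$ is polynomial in the size of $(\mR,\mT,\mA)$, the bound $2^{O(n)}$ there is $2^{\mathrm{poly}}$ in the size of the input knowledge base, i.e.\ \EXPTIME. For the equivalence, the direction ``$(\mR,\mT,\mA)$ satisfiable $\Rightarrow \Status(\nu)\neq\Unsat$'' is the contrapositive of the Soundness Corollary (the corollary following Lemma~\ref{lemma: SHQWD}). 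Hence the only remaining obligation is completeness: if $\Status(\nu)\neq\Unsat$ then $(\mR,\mT,\mA)$ is satisfiable, which I would establish by constructing a model from the returned graph $G$.

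First I would pin down the status of $\nu$ at termination. The main loop of $\Tableau$ halts only when no node is $\Unexpanded$, and the calls to $\UpdateStatus$/$\PropagateStatus$ finalise statuses upward, so $\Status(\nu)\notin\{\Unexpanded,\Expanded\}$. Moreover $\nu$ is a complex non-state with $\StatePred(\nu)=\Null$; since the $\Incomplete$ status of a non-state only serves to push converse requirements back toward a state-predecessor (which $\nu$ lacks) and the repeated $\rConv$ re-expansions strictly enlarge $\AFormulas$ within the bounded set $\closure(\mR,\mT,\mA)$, one shows $\nu$ cannot be left $\Incomplete$. Therefore $\Status(\nu)\neq\Unsat$ forces $\Status(\nu)=\Sat$. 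I would then extract a consistent sub-graph $G'$ rooted at $\nu$ by marking: at every or-node (non-state) keep exactly one successor of status $\Sat$ (which exists because the or-node is $\Sat$), and at every and-node (state) keep all its successors (each of status $\Sat$). Every node of $G'$ then has status $\Sat$, so no label in $G'$ contains $\bot$ or a clashing pair $\{\varphi,\ovl{\varphi}\}$.

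Next I would read off an interpretation $\mI$ from $G'$, following the model-construction technique of~\cite{Nguyen-ALCI,NguyenS10TCCI}. The domain consists of the named individuals of $\mA$, interpreted through the (unique, in the selected branch) complex $\Sat$ state $u_0$ of $G'$, together with one anonymous element $e_v$ for each simple $\Sat$ state $v$ reached in $G'$. I set $a^\mI\in A^\mI$ iff $a\!:\!A$ is in the full label of $u_0$, $e_v\in A^\mI$ iff $A$ is in the full label of $v$, and interpret a role name $r$ by the edges induced by the transitional rules (namely $(a^\mI,e_v)\in r^\mI$ for a $\rTransP$-transition realizing $a\!:\!\E r.C$, and $(e_v,e_{v'})\in r^\mI$ for a $\rTrans$-transition) together with the $r$-assertions recorded in $u_0$, finally closing these relations so that $R\sqsubseteq_\mR S$ gives $R^\mI\subseteq S^\mI$ and every transitive role is transitive, with inverse roles defined by $(r^-)^\mI=(r^\mI)^{-1}$. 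Then I would prove a truth lemma by induction on the structure of concepts: whenever $C$ (resp.\ $a\!:\!C$) lies in the full label of the state representing $x$, we have $x\in C^\mI$. The Boolean cases follow from $\rAnd$/$\rOr$ and their primed versions; an $\E R.C$ is witnessed one step away by the very transition the transitional rule created toward a $\Sat$ state whose full label contains $C$; a $\V R.C$ is handled forward by $\Trans_\mR$, which transfers $C$ across each $R$-edge and transfers $\V S.C$ itself across edges of a transitive super-role $S$ (the $\SRTR_\mR$ condition), the role-hierarchy edges being covered by $\rH$/$\rHP$. Since $\mT$ is contained in every label, $\mI\models\mT$; the $\rV$ rule makes $\mI\models\mA$ on the named part; and $\mI\models\mR$ holds by the closure in the definition of the roles. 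This yields a model of $(\mR,\mT,\mA)$.

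The main obstacle is the backward, inverse-role part of the $\V$-case and its interaction with the or-choices and with transitive super-roles. For a kept transition from a state $u$ to $v$ realizing $\E R.C$, the edge also lies in $(R^-)^\mI$, so any $\V R^-.D$ (or $\V S.D$ with $R^-\sqsubseteq_\mR S$ and $S$ transitive) occurring in $v$ demands the corresponding formula in $u$. The entire converse apparatus --- $\FmlsRC$, $\AltFmlSetsSC$, $\AltFmlSetsSCP$, $\DFormulas$ and the re-expansion rule $\rConv$ --- exists precisely to guarantee this. The crux is an invariant, proved by induction over the order in which statuses are determined, stating that for a $\Sat$ state the backward requirements toward its predecessor are either empty ($\FmlsRC=\emptyset$ with $\ConvMethod=0$) or offered as alternatives ($\ConvMethod=1$), and that the predecessor kept in $G'$ was re-expanded by $\rConv$ so as to absorb a compatible alternative into its label while respecting $\DFormulas$. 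Establishing that these local converse choices are globally compatible along all kept transitions --- so that the chosen predecessor label really contains every backward-required formula --- is the technical heart of completeness; I expect to need a dedicated lemma dual to Lemma~\ref{lemma: SHQWD}, asserting satisfiability (rather than unsatisfiability) of the full labels of $\Sat$ nodes.
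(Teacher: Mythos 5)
You have not proved the statement you were given. The statement is the soundness corollary itself: $\Status(\nu)=\Unsat$ implies that $(\mR,\mT,\mA)$ is unsatisfiable. Your proposal disposes of exactly this direction in one sentence --- ``the direction `$(\mR,\mT,\mA)$ satisfiable $\Rightarrow \Status(\nu)\neq\Unsat$' is the contrapositive of the Soundness Corollary'' --- which is circular: you assume the target as known and then spend the entire attempt on the complexity bound and on completeness ($\Status(\nu)\neq\Unsat$ implies satisfiability). Those are the \emph{other} two ingredients of Theorem~\ref{theorem: s-c}, handled in the paper by Lemma~\ref{lemma: tab-prop} and Lemma~\ref{lemma: completeness} respectively; neither is needed for, nor contributes anything to, the corollary. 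What the corollary actually requires is the inductive soundness invariant, Lemma~\ref{lemma: SHQWD}, plus a one-step instantiation at the root: $\nu$ is a non-state with $\StatePred(\nu)=\Null$ and $\RFormulas(\nu)=\DFormulas(\nu)=\emptyset$, so $\FullLabel(\nu)=\Label(\nu)=\mA\cup\{a\!:\!C \mid C\in\mT$ and $a$ occurs in $\mA\}$; assertion~1b of that lemma states that this set is unsatisfiable w.r.t.\ $\mR$ and $\mT$, and since any model of $(\mR,\mT,\mA)$ would satisfy it, no such model exists. You mention Lemma~\ref{lemma: SHQWD} only as a signpost; the substantive content --- the induction on $\DSTimeStamp$/$\ETimeStamp$ establishing assertions 1--3, including the cases where $\Unsat$ arises from $\FmlsRC(v)\cap\DFormulas(v)\neq\emptyset$ or propagates from successors created by $\rConv$ --- appears nowhere in your attempt.

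As a secondary remark on the material you did write (which belongs to the completeness half, Lemma~\ref{lemma: completeness}, not to this corollary): your claim that $\nu$ cannot be left $\Incomplete$ is justified in the paper not by a growth argument on $\AFormulas$ within $\closure(\mR,\mT,\mA)$ but by Lemma~\ref{lemma: GSDHE} --- when a state gets status $\Incomplete$ the incoming edge is deleted in Step~1 of procedure $\ApplyConvRule$, so no node with status $\Incomplete$ remains reachable from $\nu$; note also that the paper never shows $\Status(\nu)=\Sat$ and instead works with saturation paths through nodes whose status is merely outside $\{\Unsat,\Incomplete\}$, building a model graph and invoking Lemma~\ref{lemma: model graph}, rather than extracting a $\Sat$-labelled sub-graph as you propose. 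Your anticipated ``lemma dual to Lemma~\ref{lemma: SHQWD} asserting satisfiability of full labels of $\Sat$ nodes'' does not exist in the paper and is not how the converse bookkeeping is discharged there: the construction sticks together the nodes of a saturation path, and conditions \eqref{eq:HGDSX 4}--\eqref{eq:HGDSX 5} are verified directly from the rules $\rTrans$, $\rTransP$, $\rV$ and the fact that nodes on the path avoid status $\Incomplete$. But all of this is beside the point: with respect to the stated corollary, your proposal contains a genuine gap --- the statement is assumed, not proved.
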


This corollary follows from the assertion~1b of Lemma~\ref{lemma: SHQWD}.


\subsection{Completeness}

\begin{lemma} \label{lemma: GSDHE}
Let $G = (V,E,\nu)$ be a \CSHI-tableau for $(\mR,\mT,\mA)$. Then no node with status $\Incomplete$ is reachable from $\nu$. 
\end{lemma}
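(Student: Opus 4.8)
The plan is to follow the life-cycle of the status $\Incomplete$ and to show that it is never a \emph{stable} label on a node that survives on a path out of $\nu$: every assignment of $\Incomplete$ is a signal that the converse rule $\rConv$ is eventually obliged to consume, and consuming it severs the carrier from $\nu$. I would organise the argument as an induction on $\DSTimeStamp(v)$, after first classifying the three places where $\Incomplete$ is produced: (i) in $\ApplyTransRule$, on a state $u$ with $\ConvMethod(u)=0$ and $\FmlsRC(u)\neq\emptyset$; (ii) in $\Apply$, on a non-state successor $w$ whose governing state $v_0=\StatePred(w)$ has $\ConvMethod(v_0)=1$ and whose converse transfer $X$ is non-empty and disjoint from $\DFormulas(v_0)$; and (iii) in $\UpdateStatus$, purely by propagation. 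This already yields the structural invariant that an $\Incomplete$ node is either a state or a non-state lying in the local graph of the after-transition node $v_1=\AfterTransPred(w)$ that hangs directly below a state $v_0=\StatePred(w)\neq\Null$.

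The core is the \emph{detachment} claim for states. When a state $v_0$ first becomes $\Incomplete$, the ensuing $\PropagateStatus(v_0)$ calls $\UpdateStatus(u)$ on each expanded predecessor $u$; since $u$ reached $v_0$ by $\rFormingState$, $v_0$ is the unique successor of $u$, so the or-node branch guarded by ``$u$ has a state successor'' fires and calls $\Apply(\rConv,u)$, whose first action in $\ApplyConvRule$ deletes the edge $(u,v_0)$. I would then check the symmetric case of predecessors created \emph{after} $v_0$ is $\Incomplete$: such a $u$ attaches to the cached $v_0$ via $\ConToSucc$, is marked expanded, has $v_0$ skipped in the compatibility loop (its status being already determined), and is again routed through $\rConv$ by $\UpdateStatus(u)$, deleting the new edge. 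Hence in the returned graph every $\Incomplete$ state has empty in-degree, and since $\nu$ is a complex non-state, no $\Incomplete$ state is reachable from $\nu$.

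For $\Incomplete$ non-states I would first localise them. By the global caching for local graphs, the after-transition node $v_1$ has $v_0$ as its only predecessor and every other node of $v_1$'s local graph is reachable only through $v_1$; hence an $\Incomplete$ non-state can be reached from $\nu$ only after its governing state $v_0$ is reached. It therefore suffices to show $v_0$ unreachable, for which—by the detachment claim—it is enough to show such a $v_0$ is itself $\Incomplete$. The mechanism that should force this is that the witness's $\Incomplete$ status propagates up the local graph of $v_1$ (acyclic by Lemma~\ref{lemma: tab-prop}), its requirement is deposited in $\AltFmlSetsSCP(v_1)$, and the and-node branch of $\UpdateStatus$ copies $\AltFmlSetsSCP(v_1)$ into $\AltFmlSetsSC(v_0)$ while setting $\Status(v_0):=\Incomplete$. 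Separately, for the root's own local graph, where $\StatePred=\Null$ so neither source (i) nor (ii) can fire, a minimal-$\DSTimeStamp$ argument shows $\Incomplete$ cannot appear at all, so $\nu$ is never $\Incomplete$.

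The step I expect to be genuinely delicate is this last claim that a governing state carrying an $\Incomplete$ witness must itself turn $\Incomplete$, because it sits in tension with the or-node rule of $\UpdateStatus$: a single $\Sat$ successor makes an or-node $\Sat$ and leaves its $\Incomplete$ siblings attached, so a priori $v_1$—and with it $v_0$—could stabilise as $\Sat$ while $\Incomplete$ descendants still lie on a path from $\nu$. The heart of the proof is therefore to show this cannot persist in the returned graph: that before $v_0$ settles, every alternative recorded in $\AltFmlSetsSCP$/$\AltFmlSetsSC$ has either been absorbed into $\Label(v_0)$ through the $\ConvMethod=0$ route or re-injected by a $\rConv$ re-expansion of the predecessor of $v_0$, so that the edges carrying the $\Incomplete$ witnesses no longer lie on any path from $\nu$. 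Controlling this interplay between the converse bookkeeping and the $\rConv$ re-expansions, rather than the comparatively clean state case, is where the argument must be most careful.
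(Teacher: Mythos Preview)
Your proposal is far more elaborate than the paper's own argument, which is literally one sentence: the paper simply observes that once a \emph{state} $w$ receives status $\Incomplete$, every incoming edge $(u,w)$ is deleted (Step~1 of $\ApplyConvRule$), and declares the lemma to follow from this alone. The paper does not discuss $\Incomplete$ non-states at all.

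You are therefore right to notice that the non-state case is not covered by the paper's sentence; your analysis of the three sources of $\Incomplete$ and your detachment argument for states are accurate and match what the paper has in mind. However, the ``genuinely delicate'' step you flag at the end is not merely delicate---it appears to be an actual gap that your outline does not close. Concretely: after $\rConv$ re-expands a node $v$ in the local graph of some $v_1$ with $\StatePred(v)=v_0$ and $\ConvMethod(v_0)=1$, the newly created non-state successors of $v$ can receive $\Incomplete$ in Step~31 of $\Apply$, while another successor (possibly found by $\FindProxy$) may already be $\Sat$; then $v$ becomes $\Sat$, $v_1$ and $v_0$ may become $\Sat$, and the $\Incomplete$ sibling remains attached below a node that is reachable from~$\nu$. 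Nothing in the algorithm later detaches that sibling, so the mechanism you sketch (``every alternative recorded \ldots\ has been absorbed or re-injected'') does not actually fire in this situation.

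The practical point is that the lemma, as stated, is stronger than what the completeness proof uses: the only consequence drawn from it is $\Status(\nu)\neq\Incomplete$, and that is exactly what your ``root's own local graph'' paragraph already proves cleanly (non-states with $\StatePred=\Null$ can never acquire $\Incomplete$, and any state successor of such a node triggers $\rConv$ rather than propagation). So your argument suffices for what is needed downstream; but as a proof of the lemma in full generality it shares the same gap you correctly diagnosed---and the paper's one-line proof simply does not address that case.
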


\begin{proof}
This lemma follows from the observation that, after a state $w$ getting status $\Incomplete$, all edges coming to $w$ will be deleted (see Step~1 of procedure $\ApplyConvRule$). 
\myEnd
\end{proof}

We prove completeness of \CSHI\ via model graphs. The technique has been used in \cite{Rautenberg83,Gore99,nguyen01B5SL,GoreNguyenTab07,NguyenS10FI,NguyenSzalas-CADE-22} for other logics.
A {\em model graph} (also known as a {\em Hintikka structure}) is a tuple $\langle \Delta, \mC, \mE \rangle$, where:
\VSpace{-0.5em}
\begin{itemize}
\item $\Delta$ is a finite set, which contains (amongst
    others) all individual names (occurring in the
    considered ABox)
\item $\mC$ is a function that maps each element of
    $\Delta$ to a set of concepts
\item $\mE$ is a function that maps each role to
    a binary relation on $\Delta$.
\end{itemize}

A model graph $\langle \Delta, \mC, \mE \rangle$ is {\em $\mR$-saturated} if every $x \in \Delta$ satisfies:
\begin{eqnarray}
&-& \textrm{if $C \mand D \in \mC(x)$ then $\{C,D\} \subseteq \mC(x)$}\label{eq:HGDSX 1}\\
&-& \textrm{if $C \mor D \in \mC(x)$ then $C \in \mC(x)$ or $D \in \mC(x)$}\label{eq:HGDSX 2}\\
&-& \textrm{if $\V S.C \in \mC(x)$ and $R \sqsubseteq_\mR S$ then $\V R.C \in \mC(x)$}\label{eq:HGDSX 3}\\
&-& \textrm{if $(x,y) \in \mE(R)$ then $\Trans_\mR(\mC(x),R) \subseteq \mC(y)$}\label{eq:HGDSX 4}\\
&-& \textrm{if $(x,y) \in \mE(R)$ then $\Trans_\mR(\mC(y),R^-) \subseteq \mC(x)$}\label{eq:HGDSX 5}\\
&-& \textrm{if $\E R.C \in \mC(x)$ then there exists $y \in \Delta$ s.t.\ $(x,y) \in \mE(R)$ and $C \in \mC(y)$}\label{eq:HGDSX 6}
\end{eqnarray}

A model graph $\langle \Delta, \mC, \mE \rangle$ is
{\em consistent} if no $x \in \Delta$ has $\mC(x)$ containing
$\bot$ or containing both $A$ and $\lnot A$ for some atomic
concept~$A$.

Given a~model graph $M = \langle \Delta, \mC, \mE \rangle$, the
{\em $\mR$-model corresponding to~$M$} is the interpretation
$\mI = \langle \Delta, \cdot^{\mI}\rangle$ where:
\begin{itemize}
\item $a^\mI = a$ for every individual name~$a$
\item $A^\mI = \{x \in \Delta \mid A \in \mC(x)\}$ for every concept name $A$
\item $r^\mI = \mE'(r)$ for every role name $r \in \RN$, where $\mE'(R)$ for $R \in \RN \cup \RN^{-}$ are the smallest binary relations on $\Delta$ such that:
  \begin{itemize}
  \item $\mE(R) \subseteq \mE'(R)$ 
  \item $\mE'(R^-) = (\mE'(R))^{-1}$ (i.e.,\ $(\mE'(R))^{-1} \subseteq \mE'(R^-)$ and $(\mE'(R^-))^{-1} \subseteq \mE'(R)$)
  \item if $R \sqsubseteq_\mR S$ then $\mE'(R) \subseteq \mE'(S)$
  \item if $R \circ R \sqsubseteq R \in \Ext(\mR)$ then $\mE'(R) \circ \mE'(R) \subseteq \mE'(R)$.
  \end{itemize}
\end{itemize}

Note that the smallest binary relations mentioned above always exist: for each $R \in \RN \cup \RN^{-}$, initialize $\mE'(R)$ with $\mE(R)$; then, while one of the above mentioned condition is not satisfied, extend $\mE'(R)$ for an appropriate $R \in \RN \cup \RN^{-}$ minimally to satisfy the condition.

\begin{lemma} \label{lemma: model graph}
If $\mI$ is the $\mR$-model corresponding to a consistent $\mR$-saturated model graph $\langle \Delta, \mC, \mE \rangle$, then $\mI$ is a model of $\mR$ and, for every $x \in \Delta$ and $C \in \mC(x)$, we have that $x \in C^\mI$.
\end{lemma}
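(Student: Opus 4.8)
The plan is to prove the two claims in turn, with the truth assertion resting on an auxiliary propagation property of the closure relations $\mE'$. First I would check that $\mI$ is a model of $\mR$, which is almost immediate from the construction. The key observation is that $R^\mI = \mE'(R)$ for every role $R \in \RN \cup \RN^-$: for a role name this is the definition, and for an inverse role $(r^-)^\mI = (r^\mI)^{-1} = (\mE'(r))^{-1} = \mE'(r^-)$ by the closure condition $\mE'(R^-) = (\mE'(R))^{-1}$. Then for an axiom $R \sqsubseteq S$ of $\mR$ we have $R \sqsubseteq_\mR S$, so the subrole closure condition gives $R^\mI = \mE'(R) \subseteq \mE'(S) = S^\mI$; and for an axiom $R \circ R \sqsubseteq R$ of $\mR \subseteq \Ext(\mR)$ the transitivity closure condition gives $R^\mI \circ R^\mI \subseteq R^\mI$. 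Hence $\mI$ is a model of $\mR$.

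The core of the proof is a propagation lemma: for all $x, y \in \Delta$ and every role $R$, if $(x,y) \in \mE'(R)$ then both $\Trans_\mR(\mC(x),R) \subseteq \mC(y)$ and $\Trans_\mR(\mC(y),R^-) \subseteq \mC(x)$. I would prove this by induction on the number of closure steps (finitely many, since $\Delta$ is finite) that place $(x,y)$ into $\mE'(R)$. The base case $(x,y) \in \mE(R)$ is exactly the saturation conditions \eqref{eq:HGDSX 4} and \eqref{eq:HGDSX 5}, and the inverse step follows from the symmetry of the statement together with $(R^-)^- = R$, applied to $(y,x) \in \mE'(R^-)$. For a subrole step, in which $(x,y)$ enters $\mE'(S)$ from $\mE'(R)$ with $R \sqsubseteq_\mR S$, I would unfold $\Trans_\mR(\mC(x),S)$: a member arising from $\V S.D \in \mC(x)$ gives $\V R.D \in \mC(x)$ by \eqref{eq:HGDSX 3}, and a member $\V S'.D$ with $\SRTR_\mR(S,S')$ gives $\SRTR_\mR(R,S')$ by transitivity of $\sqsubseteq_\mR$; either way the member lies in $\Trans_\mR(\mC(x),R) \subseteq \mC(y)$, and the $S^-$ direction is symmetric using $R^- \sqsubseteq_\mR S^-$.

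I expect the transitivity step to be the main obstacle, and it is exactly what the formula-copying component $\{\V S.D \in X \mid \SRTR_\mR(R,S)\}$ of $\Trans_\mR$ is designed to handle. Here $(x,z)$ enters $\mE'(R)$ from $(x,y),(y,z) \in \mE'(R)$ with $R$ transitive. Given a member of $\Trans_\mR(\mC(x),R)$ coming from $\V R.D \in \mC(x)$: since $R$ is transitive, $\SRTR_\mR(R,R)$ holds, so $\V R.D$ itself lies in $\Trans_\mR(\mC(x),R) \subseteq \mC(y)$ by the inductive hypothesis for $(x,y)$, and the inductive hypothesis for $(y,z)$ then puts $D \in \Trans_\mR(\mC(y),R) \subseteq \mC(z)$; the case of a member $\V S.D$ with $\SRTR_\mR(R,S)$ is analogous, the whole formula surviving both hops. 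Without propagating the universal as a formula one could not cross the transitive edge, which is the crux of the argument.

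With the propagation lemma in hand, I would finish by structural induction on $C$ for the truth assertion. The literal cases follow from the definition of $A^\mI$ and from consistency (for $\lnot A$, consistency forbids $A \in \mC(x)$, so $x \notin A^\mI$); the $\mand$ and $\mor$ cases use \eqref{eq:HGDSX 1} and \eqref{eq:HGDSX 2} with the inductive hypothesis; the $\E R.D$ case uses \eqref{eq:HGDSX 6} together with $\mE(R) \subseteq \mE'(R) = R^\mI$. For $C = \V R.D$, any $y$ with $(x,y) \in R^\mI = \mE'(R)$ satisfies $D \in \Trans_\mR(\mC(x),R) \subseteq \mC(y)$ by the propagation lemma, whence $y \in D^\mI$ by the inductive hypothesis on the smaller concept $D$; as $y$ was arbitrary, $x \in (\V R.D)^\mI$. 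This completes the plan.
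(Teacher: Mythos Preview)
Your proof is correct and follows essentially the same approach as the paper: establish the propagation property that $(x,y)\in\mE'(R)$ implies $\Trans_\mR(\mC(x),R)\subseteq\mC(y)$ and $\Trans_\mR(\mC(y),R^-)\subseteq\mC(x)$ by induction on the closure construction of $\mE'$, with the same base case and the same three inductive cases (inverse, subrole, transitive composition), and then conclude by structural induction on~$C$. You spell out more detail than the paper does---in particular the subrole case via condition~\eqref{eq:HGDSX 3}, the transitivity case via the $\SRTR_\mR(R,R)$ self-propagation of $\V R.D$, and the individual clauses of the structural induction---but the argument is the same.
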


\begin{proof}
Clearly, $\mI$ is a model of $\mR$. 
For the remaining assertion of the lemma, we first prove that if $(x,y) \in R^\mI$ then $\Trans_\mR(\mC(x),R) \subseteq \mC(y)$ and $\Trans_\mR(\mC(y),R^-) \subseteq \mC(x)$. 
We prove this by induction on the timestamp of the addition of the pair $(x,y)$  to $\mE'(R)$ when constructing $\mI$ from the model graph. The base case is when $(x,y) \in \mE(R)$ and follows from the assumption that $\langle \Delta, \mC, \mE \rangle$ is an $\mR$-saturated model graph.
For induction step, there are the following cases:
\begin{itemize}
\item Case $(x,y)$ is added to $\mE'(R)$ because $(y,x) \in \mE'(R^-)$: The induction hypothesis immediately follows from the inductive assumption.

\item Case $(x,y)$ is added to $\mE'(R)$ because $(x,y) \in \mE'(S)$ and $S \sqsubseteq_\mR R\,$: By~\eqref{eq:HGDSX 3}, we have that $\Trans_\mR(\mC(x),R) \subseteq \Trans_\mR(\mC(x),S)$ and $\Trans_\mR(\mC(y),R^-) \subseteq \Trans_\mR(\mC(y),S^-)$. The induction hypothesis follows from these properties and the inductive assumption with $S$ replacing~$R$.

\item Case $(x,y)$ is added to $\mE'(R)$ because $R \circ R \sqsubseteq R \in \Ext(\mR)$, $(x,z) \in \mE'(R)$ and $(z,y) \in \mE'(R)$: The induction hypothesis follows from the inductive assumption with $(x,z)$ replacing $(x,y)$ and from the inductive assumption with $(z,y)$ replacing $(x,y)$.
\end{itemize}

The remaining assertion of the lemma can then be proved by induction on the structure of $C$ in a straightforward way.
\myEnd
\end{proof}



Let $G = (V,E)$ be a \CSHI graph for $(\mR,\mT,\mA)$ and $v \in V$ be a non-state with $\Status(v) \notin \{\Unsat,\Incomplete\}$. A {\em saturation path} of $v$ is a sequence $v_0 = v$, $v_1$, \ldots, $v_k$ of nodes of $G$, with $k \geq 1$, such that $\Type(v_k) = \State$ and 
\begin{itemize}
\item for every $1 \leq i \leq k$, $\Status(v_i) \notin \{\Unsat,\Incomplete\}$
\item for every $0 \leq i < k$, $\Type(v_i) = \NonState$ and $(v_i,v_{i+1}) \in E$. 
\end{itemize}
Observe that each saturation path of $v$ is finite (by the assertion~\ref{item: HHSKA 2} of Lemma~\ref{lemma: tab-prop}). Furthermore, if $v_i$ is a non-state with $\Status(v_i) \notin \{\Unsat,\Incomplete\}$ then $v_i$ has a successor $v_{i+1}$ with $\Status(v_{i+1}) \notin \{\Unsat,\Incomplete\}$. Therefore, $v$ has at least one saturation path. 

\begin{lemma}[Completeness of \CSHI] \label{lemma: completeness}
Let $G = (V,E,\nu)$ be a \CSHI-tableau for $(\mR,\mT,\mA)$. Suppose that $\Status(\nu) \neq \Unsat$. Then $(\mR,\mT,\mA)$ is satisfiable.
\end{lemma}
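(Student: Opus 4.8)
The plan is to read a model of $(\mR,\mT,\mA)$ off a fragment of the ``and-or'' graph $G$ consisting only of nodes that are not doomed to be $\Unsat$, and then to invoke Lemma~\ref{lemma: model graph}. Since $\Tableau$ halts only when no node has status $\Unexpanded$, and since by Lemma~\ref{lemma: GSDHE} no node reachable from $\nu$ has status $\Incomplete$, every node reachable from $\nu$ has status $\Sat$ or $\Expanded$; call such a node \emph{good}, and note that $\nu$ is good by hypothesis. Inspecting procedure $\UpdateStatus$ (together with Lemma~\ref{lemma: GSDHE}, which rules out $\Incomplete$ successors of a reachable node) gives the two facts that drive the construction: a good non-state ($\OrNode$) has at least one good successor, since otherwise all its successors would be $\Unsat$ and $\UpdateStatus$ would have set its status to $\Unsat$; and a good state ($\AndNode$) has \emph{all} of its successors good, since a single $\Unsat$ successor would force status $\Unsat$. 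Moreover $\TUnsat$ fails at every good node, so no label of a good node contains $\bot$ or a complementary pair $\{\varphi,\ovl{\varphi}\}$.

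Next I would build a model graph $M = \langle \Delta, \mC, \mE\rangle$ by a terminating traversal of good nodes. Beginning at the root (a complex non-state), I follow a \emph{saturation path} of good nodes to the complex state $u_0$ it reaches; the individual names of $\mA$ become elements of $\Delta$, with $\mC(a)$ collecting every concept $C$ such that $a\!:\!C$ lies in the labels accumulated along the path (equivalently, in the available formulas of $u_0$, after any $\rConv$ re-expansion), and each role assertion $R(a,b)$ contributes $(a,b)$ to $\mE(R)$. Whenever a selected state realizes an existential---an $\E R.C$ at a simple state via $\rTrans$, or an $a\!:\!\E R.C$ at a complex state via $\rTransP$---I introduce a fresh element $y$ (reusing $a$ where appropriate), follow a good saturation path from the created non-state to a simple state, set $\mC(y)$ from that state's available formulas, and add the realizing pair to $\mE(R)$; I then recurse from the new simple state. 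The recursion terminates and $\Delta$ is finite because, by Lemma~\ref{lemma: tab-prop}, all labels are subsets of the finite set $\closure(\mR,\mT,\mA)$ and global caching bounds the number of distinct states. The choices made at non-states use the good-successor fact above, and the sets $\Trans_\mR(\cdot)$ computed by the transitional rules are exactly what is copied into the successor labels.

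I would then check that $M$ is consistent and $\mR$-saturated and conclude. Consistency is immediate from the failure of $\TUnsat$ on good nodes. For $\mR$-saturation, conditions \eqref{eq:HGDSX 1}, \eqref{eq:HGDSX 2} and \eqref{eq:HGDSX 3} hold because the rule priorities and applicability conditions force the available formulas of each selected state to be closed under $\rAnd/\rAndP$, $\rOr/\rOrP$ and $\rH/\rHP$; condition \eqref{eq:HGDSX 6} holds because $\rTrans$ and $\rTransP$ realize every existential of a state; and the forward condition \eqref{eq:HGDSX 4} holds because each transition copies $\Trans_\mR(\Label(v),R)$ into the target label. Lemma~\ref{lemma: model graph} then shows that the $\mR$-model $\mI$ corresponding to $M$ is a model of $\mR$ with $x \in C^\mI$ for every $C \in \mC(x)$. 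Since every concept of $\mT$ is inserted at the root (as $a\!:\!C$ for each individual $a$) and at every transition, and since $\mC$ and $\mE$ were read off the ABox label of $u_0$, the interpretation $\mI$ also models $\mT$ and $\mA$, so $(\mR,\mT,\mA)$ is satisfiable.

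The main obstacle is verifying the inverse-role condition \eqref{eq:HGDSX 5}: when $(x,y)\in\mE(R)$ is added because $y$ realizes $\E R.C$ at $x$, I must ensure that the whole backward transfer $\Trans_\mR(\mC(y),R^-)$ already lies in $\mC(x)$ (and symmetrically for $\V S.D$ with $S \circ S \sqsubseteq S \in \Ext(\mR)$). This is precisely the role of the converse machinery: a state whose status is neither $\Unsat$ nor $\Incomplete$ certifies, through $\ConvMethod$, $\FmlsRC$, $\AltFmlSetsSC$ and the $\DFormulas$ bookkeeping installed by procedures $\ApplyTransRule$ and $\ApplyConvRule$, that these converse formulas were either already present in, or were consistently added back into, the predecessor's label by an $\rConv$ re-expansion. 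The delicate points are to read $\mC(x)$ off the \emph{correct} post-$\rConv$ label and to match the selected good or-descendant of the transition target with the set recorded in $\AltFmlSetsSCP$ (the same bookkeeping whose soundness was established in Lemma~\ref{lemma: SHQWD}), and to thread the transitive-role case uniformly through the $\Trans_\mR$ definitions in both \eqref{eq:HGDSX 4} and \eqref{eq:HGDSX 5}.
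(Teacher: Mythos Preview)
Your proposal is correct and follows essentially the same approach as the paper: take a saturation path of non-$\Unsat$, non-$\Incomplete$ nodes from $\nu$ to a complex state, read off $\mC(a)$ from its available formulas and $\mE$ from the ABox, then iteratively realize each $\E R.C$ (resp.\ $a\!:\!\E R.C$) by following a saturation path from the corresponding $\rTrans$/$\rTransP$-successor to a simple state, reusing elements with the same $\mC$-content for termination, and finally invoke Lemma~\ref{lemma: model graph}. One small point: the verification of condition~\eqref{eq:HGDSX 5} is less delicate than you anticipate---there is no need to track $\AltFmlSetsSCP$ or post-$\rConv$ labels explicitly, because the chosen state $u$ already has its final label, and the fact that the end node $w_h$ of the inner saturation path has status $\notin\{\Unsat,\Incomplete\}$ means the converse check in procedure $\Apply$ (lines computing $X = \Trans_\mR(\Label(w_h),R^-)\setminus\AFormulas(u)$) was vacuous, i.e.\ $\Trans_\mR(\Label(w_h),R^-)\subseteq\AFormulas(u)=\mC(x)$ directly; also be explicit that you \emph{reuse} an existing $y$ whenever $\mC(y)=\AFormulas(w_h)$, since ``fresh $y$'' alone would not give termination.
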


\begin{proof}
By Lemma~\ref{lemma: GSDHE}, $\Status(\nu) \neq \Incomplete$. 
Hence $\nu$ has a saturation path $v_0, \ldots,v_k$ with $v_0 = \nu$. 
We construct a model graph $M = \langle \Delta, \mC, \mE\rangle$ as follows:
\begin{enumerate}
\item Let $\Delta_0$ be the set of all individuals
    occurring in $\mA$ and set $\Delta := \Delta_0$. For
    each $a \in \Delta_0$, set $\mC(a)$ to the set of all
    concepts $C$ such that $a\!:\!C \in \AFormulas(v_k)$, 
    and mark $a$ as {\em unresolved}. (Each node
    of $M$ will be marked either as unresolved or as
    resolved.) For each role $R$, set $\mE(R) :=
    \{(a,b) \mid R(a,b) \in \mA\}$.

\item While $\Delta$ contains unresolved nodes, take one
    unresolved node $x$ and do:
  \begin{enumerate}
  \item \label{step2a} For every concept $\E R.C \in
      \mC(x)$ do:
     \begin{enumerate}
     \item If $x \in \Delta_0$ then:
        \begin{itemize}
        \item Let $u = v_k$.
        \item Let $w_0$ be the node of $G$ such that $\CELabel(w_0) = x\!:\!\E R.C$. (Note that $C \in \Label(w_0)$ and $\Status(w_0) \notin \{\Unsat$, $\Incomplete\}$ since $\Status(v_k) \notin \{\Unsat$, $\Incomplete\}$.)  
        \end{itemize}

     \item Else:
        \begin{itemize}
        \item Let $u = f(x)$. ($f$ is
            a~constructed mapping that associates 
            each node of $M$ not belonging to
            $\Delta_0$ with a simple state of
            $G$. As a maintained property of
            $f$, $\Status(u) \notin \{\Unsat,\Incomplete\}$, 
	    $\E R.C \in \Label(u)$ and $\mC(x) = \AFormulas(u)$.)
        \item Let $w_0$ be the node of $G$
            such that $\CELabel(w_0) = \E R.C$. 
            (Note that $C \in \Label(w_0)$ and $\Status(w_0) \notin \{\Unsat$, $\Incomplete\}$ since $\Status(u) \notin \{\Unsat$, $\Incomplete\}$.)
        \end{itemize}

     \item Let $w_0$, \ldots, $w_h$ be a~saturation
         path of $w_0$.\\ (Note that $\Status(w_h) \notin \{\Unsat$, $\Incomplete\}$.)
     \item \label{step2(a)iv} If there does not
         exist $y \in \Delta \setminus \Delta_0$ such that $\mC(y) =
         \AFormulas(w_h)$ then: add a~new node $y$ to $\Delta$,
         set $\mC(y) = \AFormulas(w_h)$, mark $y$ as unresolved,
         and set $f(y) = w_h$. (One can consider
         $y$ as the result of sticking together the
         nodes $w_0,\ldots,w_h$ of a saturation path of $w_0$. The above
         mentioned properties of $f$ still hold.)
     \item Add the pair $(x,y)$ to $\mE(R)$.
     \end{enumerate}

  \item Mark $x$ as resolved.
  \end{enumerate}
\end{enumerate}

The above construction terminates and results in a~finite model
graph because: for every $x,x' \in \Delta \setminus \Delta_0$,
$x \neq x'$ implies $\mC(x) \neq \mC(x')$, and for every $x \in
\Delta$, $\mC(x)$ is a~subset of $\closure(\mR,\mT,\mA)$.

Note the following remarks for the remaining part of this proof: 
\begin{itemize}
\item For any node $v$ of $G$, $\RFormulas(v)$ may contain only formulas of the form $C \mand D$, $C \mor D$, $a\!:\!(C \mand D)$, or $a\!:\!(C \mor D)$. Hence, if $\varphi$ is of the form $\V R.C$, $\E R.C$, $a\!:\!\V R.C$ or $a\!:\!\E R.C$ and $\varphi \in \AFormulas(v)$ then we must have that $\varphi \in \Label(v)$.
\item After executing Step~\ref{step2(a)iv}, $\Label(w_0) \subseteq \AFormulas(w_h) = \mC(y)$. Hence, if $D \in \Label(w_0)$ then $D \in \mC(y)$. 
\end{itemize}

$M$ is a consistent model graph because $\Status(v_k) \neq \Unsat$ and if $x \in \Delta \setminus \Delta_0$ and $u = f(x)$ then $\mC(x) = \AFormulas(u)$ and $\Status(u) \neq \Unsat$. 

We show that $M$ satisfies all Conditions \eqref{eq:HGDSX 1}-\eqref{eq:HGDSX 6} of being a saturated model graph. $M$ satisfies Conditions~\eqref{eq:HGDSX 1}-\eqref{eq:HGDSX 3} because the sequence $v_0,\ldots,v_k$ is a saturation path of $v_0$, and at Step~\ref{step2a}, the sequence $w_0,\ldots,w_h$ is a saturation path of~$w_0$. $M$ satisfies Condition~\eqref{eq:HGDSX 6} because at Step~\ref{step2a}, $C \in \Label(w_0)$, and hence $C \in \mC(y)$.

Consider Condition~\eqref{eq:HGDSX 4}:
\VSpace{-1ex}
\begin{itemize}
\item Assume $x \in \Delta$ and $(x,y) \in \mE(R)$. We show that $\Trans(\mC(x),R) \subseteq \mC(y)$.

\item 
Consider the case $x \in \Delta_0$ and $\V R.D \in \mC(x)$. 
We show that $D \in \mC(y)$. 
Since $\V R.D \in \mC(x)$, we have that $x\!:\!\V R.D \in \Label(v_k)$. If $y \in \Delta_0$, then $R(x,y) \in \mA$ (since $(x,y) \in \mE(R)$), and hence $y\!:\!D \in \AFormulas(v_k)$ (due to the tableau rule $\rV$), and hence $D \in \mC(y)$.
Assume that $y \notin \Delta_0$ and $y$ is created at Step~\ref{step2(a)iv}. Since $(x\!:\!\V R.D)$ belongs to the label of $u = v_k$, by the tableau rule $\rTransP$, $D \in \Label(w_0)$, and hence $D \in \mC(y)$.

\item 
Consider the case $x \in \Delta_0$, $\V S.D \in \mC(x)$, $R \sqsubseteq_\mR S$ and $S \circ S \sqsubseteq S \in \Ext(\mR)$. 
We show that $\V S.D \in \mC(y)$. 
Since $\V S.D \in \mC(x)$, we have that $x\!:\!\V S.D \in \Label(v_k)$. If $y \in \Delta_0$, then $R(x,y) \in \mA$ (since $(x,y) \in \mE(R)$), and hence $y\!:\!\V S.D \in \AFormulas(v_k)$ (due to the tableau rule $\rV$), and hence $\V S.D \in \mC(y)$.
Assume that $y \notin \Delta_0$ and $y$ is created at Step~\ref{step2(a)iv}. Since $(x\!:\!\V S.D)$ belongs to the label of $u = v_k$, by the tableau rule $\rTransP$, $\V S.D \in \Label(w_0)$, and hence $\V S.D \in \mC(y)$.

\item Consider the case $x \notin \Delta_0$ and Step~\ref{step2a} at which the pair $(x,y)$ is added to $\mE(R)$. 
  \begin{itemize}
  \item Suppose that $\V R.D \in \mC(x)$. We show that $D \in \mC(y)$. 
Since $\V R.D \in \mC(x)$ and $\mC(x) = \AFormulas(u)$, we have that $\V R.D \in \Label(u)$. By the tableau rule $\rTrans$, it follows that $D \in \Label(w_0)$, and hence $D \in \mC(y)$.
  \item Suppose that $\V S.D \in \mC(x)$, $R \sqsubseteq_\mR S$ and $S \circ S \sqsubseteq S \in \Ext(\mR)$. We show that $\V S.D \in \mC(y)$. 
Since $\V S.D \in \mC(x)$ and $\mC(x) = \AFormulas(u)$, we have that $\V S.D \in \Label(u)$. By the tableau rule $\rTrans$, it follows that $\V S.D \in \Label(w_0)$, and hence $\V S.D \in \mC(y)$.
  \end{itemize}
\end{itemize}

\VSpace{-1ex}
Consider Condition~\eqref{eq:HGDSX 5}:
\VSpace{-1ex}
\begin{itemize}
\item Assume $y \in \Delta$ and $(x,y) \in \mE(R)$. We show that $\Trans(\mC(y),R^-) \in \mC(x)$.

\item Consider the case $y \in \Delta_0$. We must have that $x \in \Delta_0$ and $R(x,y) \in \mA$. 
  \begin{itemize}
  \item If $\V R^-.D \in \mC(y)$ then $y\!:\!\V R^-.D \in \Label(v_k)$, which implies that $x\!:\!D \in \AFormulas(v_k)$ (by the tableau rule $\rV$), and hence $D \in \mC(x)$.
  \item If $\V S.D \in \mC(y)$, $R^- \sqsubseteq_\mR S$ and $S \circ S \sqsubseteq S \in \Ext(\mR)$ then $y\!:\!\V S.D \in \Label(v_k)$, which implies that $x\!:\!\V S.D \in \AFormulas(v_k)$ (by the tableau rule $\rV$), and hence $\V S.D \in \mC(x)$.
  \end{itemize}

\item Consider the case $y \notin \Delta_0$ and Step~\ref{step2a} at which the pair $(x,y)$ is added to $\mE(R)$. 
  \begin{itemize}
  \item Suppose that $\V R^-.D \in \mC(y)$. We show that $D \in \mC(x)$.
Since $\V R^-.D \in \mC(y)$, we have that $\V R^-.D \in \Label(w_h)$. Since $\Status(w_h) \notin \{\Unsat$, $\Incomplete\}$, we must have that:
     \begin{itemize}
     \item if $u$ is a simple node then $x \notin \Delta_0$ and $D \in \AFormulas(u)$, and hence $D \in \mC(x)$
     \item else $u = v_k$, $x \in \Delta_0$ and $x\!:\!D \in \AFormulas(u)$, and hence $D \in \mC(x)$.
     \end{itemize}

  \item Suppose that $\V S.D \in \mC(y)$, $R^- \sqsubseteq_\mR S$ and $S \circ S \sqsubseteq S \in \Ext(\mR)$. We show that $\V S.D \in \mC(x)$.
Since $\V S.D \in \mC(y)$, we have that $\V S.D \in \Label(w_h)$. Since $\Status(w_h) \notin \{\Unsat$, $\Incomplete\}$, we must have that:
     \begin{itemize}
     \item if $u$ is a simple node then $x \notin \Delta_0$ and $\V S.D \in \AFormulas(u)$, and hence $\V S.D \in \mC(x)$
     \item else $u = v_k$, $x \in \Delta_0$ and $x\!:\!\V S.D \in \AFormulas(u)$, and hence $\V S.D \in \mC(x)$.
     \end{itemize}
  \end{itemize}

\end{itemize}

Therefore $M$ is a consistent saturated model graph.

By the definition of \CSHI graphs for $(\mR,\mT,\mA)$ and the
construction of $M$: if $(a\!:\!C) \in \mA$ then $C \in
\mC(a)$; if $R(a,b) \in \mA$ then $(a,b) \in \mE(R)$; and $\mT
\subseteq \mC(a)$ for all $a \in \Delta_0$. We also have that
$\mT \subseteq \mC(x)$ for all $x \in \Delta \setminus
\Delta_0$. Hence, by Lemma~\ref{lemma: model graph}, the
interpretation corresponding to $M$ is a~model of $(\mR,\mT,\mA)$.
\myEnd
\end{proof}
} 


\section{Conclusions}
\label{section: conc}

We have given the first cut-free \EXPTIME (optimal) tableau decision procedure for checking satisfiability of a knowledge base in the description logic \SHI. 
Our decision procedure is novel: in contrast to~\cite{GoreNguyenTab07,GoreW09,GoreW10}, it deals also with ABoxes; in contrast to~\cite{GoreNguyenTab07,NguyenSzalas-SL}, it does not use cuts; in contrast to~\cite{NguyenS10TCCI,NguyenS10FI}, it deals also with inverse roles; and in contrast to~\cite{Nguyen-ALCI}, it deals also with transitive roles and hierarchies of roles. 
The procedure can be implemented with various optimizations as in~\cite{Nguyen08CSP-FI} to give an efficient complexity-optimal program for checking satisfiability of a knowledge base in the popular DL \SHI. 
\LongVersion{We intend to extend our methods for other DLs.}



\begin{thebibliography}{10}

\bibitem{dlbook}
F.~Baader, D.~Calvanese, D.L. McGuinness, D.~Nardi, and P.F. Patel-Schneider,
  editors.
\newblock {\em Description Logic Handbook}.
\newblock Cambridge University Press, 2002.

\bibitem{DAgostino1999}
M.~D'Agostino, D.M. Gabbay, R.~H{\"a}hnle, and J.~Posegga, editors.
\newblock {\em Handbook of Tableau Methods}.
\newblock Springer, 1999.

\bibitem{Gore99}
R.~Gor\'{e}.
\newblock Tableau methods for modal and temporal logics.
\newblock In D'Agostino et~al. \cite{DAgostino1999}, pages 297--396.

\bibitem{GoreNguyenTab07}
R.~Gor\'{e} and L.A. Nguyen.
\newblock {ExpTime} tableaux with global caching for description logics with
  transitive roles, inverse roles and role hierarchies.
\newblock In N.~Olivetti, editor, {\em Proc.~of TABLEAUX 2007}, volume 4548 of
  {\em LNAI}, pages 133--148. Springer, 2007.

\bibitem{GoreW09}
R.~Gor{\'e} and F.~Widmann.
\newblock Sound global state caching for $\mathcal{ALC}$ with inverse roles.
\newblock In M.~Giese and A.~Waaler, editors, {\em Proceedings of TABLEAUX
  2009}, volume 5607 of {\em LNCS}, pages 205--219. Springer, 2009.

\bibitem{GoreW10}
R.~Gor{\'e} and F.~Widmann.
\newblock Optimal and cut-free tableaux for propositional dynamic logic with
  converse.
\newblock In J.~Giesl and R.~H{\"a}hnle, editors, {\em Proceedings of IJCAR
  2010}, volume 6173 of {\em LNCS}, pages 225--239. Springer, 2010.

\bibitem{HorrocksKS06}
I.~Horrocks, O.~Kutz, and U.~Sattler.
\newblock The even more irresistible $\mathcal{SROIQ}$.
\newblock In P.~Doherty, J.~Mylopoulos, and C.A. Welty, editors, {\em
  Proceedings of KR'2006}, pages 57--67. AAAI Press, 2006.

\bibitem{HorrockSattler99}
I.~Horrocks and U.~Sattler.
\newblock A description logic with transitive and inverse roles and role
  hierarchies.
\newblock {\em J. Log. Comput.}, 9(3):385--410, 1999.

\bibitem{nguyen01B5SL}
L.A. Nguyen.
\newblock Analytic tableau systems and interpolation for the modal logics {KB},
  {KDB}, {K5}, {KD5}.
\newblock {\em Studia Logica}, 69(1):41--57, 2001.

\bibitem{Nguyen08CSP-FI}
L.A. Nguyen.
\newblock An efficient tableau prover using global caching for the description
  logic $\mathcal{ALC}$.
\newblock {\em Fundamenta Informaticae}, 93(1-3):273--288, 2009.

\bibitem{Nguyen-ALCI}
L.A. Nguyen.
\newblock Cut-free {ExpTime} tableaux for checking satisfiability of a
  knowledge base in the description logic $\mathcal{ALCI}$.
\newblock In {\em Proceedings of ISMIS'2011}, volume 6804 of {\em LNAI}, pages
  465--475. Springer, 2011.

\bibitem{NguyenSzalas09ICCCI}
L.A. Nguyen and A.~Sza{\l}as.
\newblock {ExpTime} tableaux for checking satisfiability of a knowledge base in
  the description logic $\mathcal{ALC}$.
\newblock In N.T. Nguyen, R.~Kowalczyk, and S.-M. Chen, editors, {\em
  Proceedings of ICCCI'2009}, volume 5796 of {\em LNAI}, pages 437--448.
  Springer, 2009.

\bibitem{NguyenSzalas-CADE-22}
L.A. Nguyen and A.~Sza\l{}as.
\newblock A tableau calculus for regular grammar logics with converse.
\newblock In Renate~A. Schmidt, editor, {\em Proceedings of CADE-22}, volume
  5663 of {\em LNAI}, pages 421--436. Springer-Verlag, 2009.

\bibitem{NguyenS10FI}
L.A. Nguyen and A.~Sza\l{}as.
\newblock Checking consistency of an {ABox} w.r.t. global assumptions in {PDL}.
\newblock {\em Fundamenta Informaticae}, 102(1):97--113, 2010.

\bibitem{NguyenS10TCCI}
L.A. Nguyen and A.~Sza\l{}as.
\newblock Tableaux with global caching for checking satisfiability of a
  knowledge base in the description logic $\mathcal{SH}$.
\newblock {\em T. Computational Collective Intelligence}, 1:21--38, 2010.

\bibitem{NguyenSzalas-SL}
L.A. Nguyen and A.~Sza{\l}as.
\newblock {ExpTime} tableau decision procedures for regular grammar logics with
  converse.
\newblock {\em Studia Logica}, 98(3), 2011.

\bibitem{Rautenberg83}
W.~Rautenberg.
\newblock Modal tableau calculi and interpolation.
\newblock {\em JPL}, 12:403--423, 1983.

\end{thebibliography}


\end{document}